\documentclass[a4paper,11pt]{article}
\usepackage[utf8]{inputenc}
\usepackage{amsmath}
\usepackage{graphicx}

\usepackage{amsmath}
\usepackage{amssymb}
\usepackage{ascmac}
\usepackage{amsthm}
\usepackage{here}
\theoremstyle{plain}
\newtheorem{theorem}{Theorem}
\newtheorem{lemma}[theorem]{Lemma}

\theoremstyle{definition}
\newtheorem{definition}[theorem]{Definition}

\usepackage{algorithm}
\usepackage{algorithmic}
\usepackage{bm}
\usepackage{url}
\usepackage{cases}
\usepackage{graphicx}
\usepackage{comment}
\usepackage{blindtext}
\usepackage{thm-restate}
\usepackage{thmtools}
\usepackage{booktabs}
\usepackage{mleftright}\mleftright
\usepackage{multicol}
\usepackage{authblk}
\usepackage{enumerate}

\usepackage{bm}
\usepackage{authblk}

\usepackage[margin=2.3cm]{geometry}
\usepackage{microtype}  
\usepackage{hyphenat}
\usepackage{natbib}

\allowdisplaybreaks

\usepackage[hidelinks]{hyperref}
\usepackage[capitalize,nameinlink]{cleveref}

\usepackage{tikz}
\usepackage{tikz-cd}
\usetikzlibrary{positioning,chains,fit,shapes,decorations.pathreplacing,calligraphy,matrix,backgrounds}

\usepackage{threeparttable}
\usepackage{color}
\usepackage{CJKutf8}
\usepackage{setspace}

\usepackage{color}
\usepackage{graphicx,colortbl,multirow}
\usepackage{subcaption}

\definecolor{myblue}{RGB}{80,80,160}
\definecolor{mygreen}{RGB}{80,160,80}

\newcommand{\roundrobincase}[1]{
    \begin{tikzpicture}
      \matrix (m) [matrix of math nodes, 
        nodes in empty cells,
        nodes={outer sep=0pt}, 
        column sep=-\pgflinewidth,
        row sep=-\pgflinewidth,
        ampersand replacement=\&,
        column 1/.style={nodes={minimum width=0.5cm, minimum height=1.0cm}},
        column 2/.style={nodes={minimum width=1.3cm, minimum height=1.0cm}},
        column 3/.style={nodes={minimum width=1cm, minimum height=1.0cm}},
        column 4/.style={nodes={minimum width=1cm, minimum height=1.0cm}}]
        {
          \mathrm{Agent}\ a_1 \& A_1^{r-1} \& g_{a_1}^{r} \& g_{a_1}^{r+1}  \\
          \mathrm{Agent}\ a_2 \& A_2^{r-1} \& g_{a_2}^{r} \& g_{a_2}^{r+1} \\
          \mathrm{Agent}\ a_3 \& A_3^{r-1} \& g_{a_3}^{r} \& g_{a_3}^{r+1} \\
        };
      \begin{pgfonlayer}{background}
        #1
      \end{pgfonlayer}
      \draw (m-1-1.north west) rectangle (m-3-4.south east);
      \draw (m-1-1.north east) -- (m-3-1.south east);
      \draw (m-1-2.north east) -- (m-3-2.south east);
      \draw (m-1-3.north east) -- (m-3-3.south east);
      \draw (m-1-1.south west) -- (m-1-4.south east);
      \draw (m-2-1.south west) -- (m-2-4.south east);
    \end{tikzpicture}
}

\usepackage[color={red!100!green!33},textsize=scriptsize]{todonotes}

\usepackage[right]{lineno}

\def\final{0}  
\def\iflong{\iffalse}
\ifnum\final=0  

\else 
\newcommand{\rnote}[1]{}

\fi

\begin{document}

\title{Position Fair Mechanisms Allocating Indivisible Goods}
\author[1]{Ryoga Mahara}
\author[2]{Ryuhei Mizutani}
\author[3]{Taihei Oki}
\author[1]{Tomohiko Yokoyama}
\affil[1]{The University of Tokyo}
\affil[ ]{\texttt{ \{\href{mailto:mahara@mist.i.u-tokyo.ac.jp}{mahara}, \href{mailto:tomohiko_yokoyama@mist.i.u-tokyo.ac.jp}{tomohiko\_yokoyama}\}@mist.i.u-tokyo.ac.jp}}
\affil[2]{Keio University}
\affil[ ]{\texttt {\href{mailto:mizutani@math.keio.ac.jp}{mizutani@math.keio.ac.jp}}}
\affil[3]{Hokkaido University, RIKEN}
\affil[ ]{\texttt {\href{mailto:oki@icredd.hokudai.ac.jp}{oki@icredd.hokudai.ac.jp}}}
%

\date{}

\maketitle


\begin{abstract}
    Fair division mechanisms for indivisible goods require agent orderings to deterministically select one allocation when running the algorithm in practice.
    We introduce position envy-freeness up to one good (PEF1) as a fairness criterion for mechanisms: a mechanism is said to satisfy PEF1 if for any pair of agent orderings, no agent prefers their bundle determined under one ordering to that under another ordering by more than the utility of a single good. First, we propose a scale-invariant, polynomial-time mechanism that satisfies PEF1 and yields an envy-freeness up to one good (EF1) allocation. For the case of two agents, we establish that any mechanism producing a maximum Nash welfare allocation eliminates envy based on positions by removing one good, provided that utilities are positive. Additionally, we present a polynomial-time mechanism based on the adjusted winner procedure, which satisfies PEF1 and produces an EF1 and Pareto optimal allocation for two agents. 
    In contrast, we demonstrate that well-known mechanisms such as round-robin and envy-cycle elimination do not generally satisfy PEF1.
\end{abstract}

\section{Introduction}\label{sec:Introduction}
Fair division of indivisible goods among agents is a fundamental problem in economics and computer science with significant practical importance.
Applications range from course allocation in universities~\citep{budish2016course} and inheritance division~\citep{GoldmanProcaccia2015} to various other real-world settings~\citep{IgarashiYokoyama2023,HanSuksompong2024}.
For surveys, see~\citep{Walsh2020FairDivision,Aziz2022Algorithmic,AmanatidisSuvvey2023}.
In fair division, agents are typically assumed to have \emph{additive} utilities over bundles of goods, and the challenge is to find an allocation that meets certain fairness criteria.
One well-studied criterion is \emph{envy-freeness}~\citep{Foley1967}, which requires an allocation to satisfy that no agent prefers another agent's bundle to their own.

In this paper, we consider fairness properties not of allocations but of \emph{mechanisms}, which have received little attention in existing work.
Particularly, we deal with fairness regarding an input order of agents. 
To illustrate this, consider a situation where two agents have identical utilities for one good, requiring mechanisms to establish clear rules for determining which agent receives it.
Namely, mechanisms must incorporate an ordering among the agents.

We formalize \emph{mechanisms} as follows (see Section~\ref{sec:Mechanism} and Figure~\ref{fig:mechanism-overview} for formal definitions). 
The input of a mechanism is a tuple of utilities arranged by an \emph{agent ordering}. 
The mechanism prioritizes agents based on their \emph{positions} in the agent ordering, and outputs an allocation. 
Under this framework, utilities that agents obtain by the mechanism can vary substantially depending on their positions in the agent ordering.

Previous work by~\citet{ManabeOkamoto2012} introduced a fairness concept for mechanisms in the \emph{divisible} goods setting, based on agent orderings. 
This concept, called \emph{meta-envy-freeness}, requires a mechanism to ensure that each agent obtains the same utility regardless of the agent ordering. A meta-envy-free mechanism always exists for divisible goods since divisible goods can be split equally among agents. Specifically, when $n$ agents desire a single good, each can simply receive $1/n$ of the good, obtaining identical utilities.

In contrast, for indivisible goods, the situation differs significantly. 
A meta-envy-free mechanism cannot exist even with two agents and a single good, as only the agent in the higher priority position can receive the good. 

A similar difficulty arises for envy-freeness in the indivisible goods setting, where envy-free allocations do not always exist.
This motivated the development of relaxations such as \emph{envy-freeness up to one good (EF1)}~\citep{Budish2011}. An allocation is said to be EF1 if any agent's envy toward another agent can be eliminated by removing a single good from the envied agent's bundle. 
Importantly, an EF1 allocation always exists for agents with additive utilities~\citep{Lipton2004,Caragiannis2019}.

Inspired by EF1 and the concept of meta-envy-freeness, we introduce \emph{position envy-freeness up to one good (PEF1)}\footnote{We use the term ``position envy-free'' instead of ``meta-envy-free'' to clarify the meaning of ``meta.''} as a fairness criterion for mechanisms with respect to agent orderings.
A mechanism is said to satisfy PEF1 if for any agent and two agent orderings, their envy
towards a bundle under one ordering over that under another ordering can be eliminated by removing a single good from the envied bundle.

Since PEF1 is a property of mechanisms concerning different agent orderings, a PEF1 mechanism may not produce a fair allocation. 
This raises a fundamental question: can we design a mechanism that satisfies PEF1 and is guaranteed to produce an EF1 allocation?

\paragraph{Our Results}

First, we answer the above question affirmatively by presenting a PEF1 mechanism that always produces an EF1 allocation (Theorem~\ref{theorem:matching_draft}).
The mechanism employs a maximum-weight matching to determine a bundle for agents in each round.
Notably, the mechanism runs in polynomial time.

Second, for two agents, we establish that any mechanism that maximizes the Nash welfare (i.e., the geometric mean of agents' utilities) satisfies PEF1.
Since a maximum Nash welfare (MNW) allocation is both EF1 and Pareto optimal (PO, i.e., no allocation makes some agent better off without making another agent worse off)~\citep{Caragiannis2019}, this mechanism outputs an allocation that satisfies both EF1 and PO.
While computing a MNW allocation is NP-hard~\citep{nguyen2014computational} and even APX-hard~\citep{Lee2017},
we present a modified version of the adjusted winner mechanism~\citep{BramsTa96,AzizAdjustedWinner2015,AzizCaIg22} that is PEF1, runs in polynomial time, and produces an EF1 and PO allocation for the case of two agents.

Finally, we analyze the round-robin mechanism, which produces an EF1 allocation~\citep{Caragiannis2019}.
We show that the mechanism satisfies PEF1 for the case of at most three agents and it lacks PEF1 when the number of agents is at least four.
In addition, we show that the envy-cycle mechanism, which returns an EF1 allocation when agents have monotone utility functions~\citep{Lipton2004}, also fails to satisfy PEF1 in general.

We remark that all mechanisms proposed in this paper are scale-invariant; namely, scaling an agent's utility by any positive constant does not affect the output.

\paragraph{Further Related Work}

A mechanism is said to be \emph{anonymous} if for any agent ordering, the outcome \emph{allocation} of the mechanism remains unchanged~\citep{Gibbard1973,Satterthwaite1975}.
In this paper, we distinguish between anonymity and position (meta-)envy-freeness.
Position envy-freeness ensures that the \emph{utilities} that agents receive do not change, while the allocation produced by the mechanism may vary depending on the agent ordering.

The \emph{equal-treatment-of-equals (ETE)} is also a fairness concept for mechanisms~\cite{Moulin2004}. A mechanism satisfies ETE if agents with identical preferences receive the same bundle of goods.

Similar to position envy-freeness, both anonymity and ETE are impossible to achieve for indivisible goods.
For divisible goods, the compatibility of anonymity and ETE with fairness and efficiency properties has been extensively studied~\citep{ShapleyScarf1974,zhou1990conjecture,BOGOMOLNAIA2001,Roth2005,BeiHuzhangSuksompong2020}.

Our work also relates to a fairness concept in allocation rules. 
An \emph{allocation rule} is a map from utility profiles to sets of allocations satisfying specified criteria~\citep{sonmez1999strategy}.
An allocation rule is \emph{essentially-single-valued} if for any utility profile and any two allocations in its output set, each agent receives equal utility from these allocations~\citep{sonmez1999strategy}.
If a mechanism always selects its output from allocations determined by an essentially-single-valued allocation rule, then the mechanism is meta-envy-free.
It is known that the MNW allocation rule, defined as a mapping to all MNW allocations, is essentially-single-valued for agents with continuous utility functions over divisible goods~\citep{DubinsSpanier1961,segal2019monotonicity}.

\section{Preliminaries}\label{sec:Preliminaries}

\subsection{Fair Division Model}\label{sec:Model}
Let $M$ be the set of $m$ goods and $N$ be the set of $n$ agents.
A subset of $M$ is termed a \emph{bundle}.
Each agent $a \in N$ has a non-negative utility function $u_a:2^M  \rightarrow \mathbb{R}_{\geq 0}$. 
For simplicity, we denote $u_a(g)$ as $u_a(\{g\})$ for each $g\in M$.
We assume that $u_a$ is \emph{additive}, that is, we have $u_a(S) =\sum_{g\in S}u_a(g)$ for any $S\subseteq M$.
A family $u = \{u_a\}_{a \in N}$ of the utility functions of all agents is called a \emph{profile}.
Let $U_{\ge 0}$ denote the set of all profiles.
An \emph{allocation} $A=\{A_a\}_{a\in N}$ is a partition of $M$ into $n$ bundles, where $A_a$ denotes the bundle of agent $a\in N$.

We now introduce a fairness concept of an allocation.
An allocation $A$ is said to be \emph{envy-free} if no agent envies any other agent, i.e., $u_a(A_a) \geq u_a(A_{a'})$ for all $a, a' \in N$.
An allocation $A$ is called \emph{envy-free up to one good} (EF1) if for all $a,a'\in N$ with $A_{a'}\neq \emptyset$, there exists a good $g\in A_{a'}$ such that $u_a(A_a) \geq u_a(A_{a'} \setminus \{g\})$.

Next, we define an efficiency concept. An allocation $A$ is said to \emph{Pareto dominate} another allocation $A'$ if $u_a(A_a)\ge u_a(A'_{a})$ for all $a \in N$ and $u_{a'}(A_{a'})> u_{a'}(A'_{a'})$ for some $a' \in N$.
An allocation $A$ is called \emph{Pareto optimal} (PO) if there is no allocation that Pareto dominates $A$.

\subsection{Mechanism and Position Fairness}\label{sec:Mechanism}

We now introduce several notions related to agent orderings.
An \emph{agent ordering} is defined as a bijection from $N$ to $[n]=\{1,2,\ldots,n\}$.
We call $\pi(a)$ the position of agent $a$ under $\pi$. 
Let $\Pi$ denote the set of all agent orderings.
An \emph{ordered profile} $u = (u_1, u_2, \dotsc, u_n)$ is an ordered $n$-tuple of utility functions.
Given an agent ordering $\pi$, let $u_\pi$ be an ordered profile generated from an original profile $u = \{u_a\}_{a \in N}$ by mapping utility functions according to $\pi$, i.e., $(u_\pi)_i = u_{\pi^{-1}(i)}$ for each $i \in [n]$.
An \emph{ordered allocation} $A = (A_1, A_2, \dotsc, A_n)$ is a partition of $M$ into $n$ bundles indexed from $1$ to $n$.

A \emph{mechanism} $\mathcal{M}$ is defined as a map from ordered profiles to ordered allocations.
More precisely, an input of a mechanism is an ordered profile $u_\pi$ generated from a profile $u$ and an agent ordering $\pi$, and the mechanism cannot access $u$ and $\pi$; in other words, the mechanism is ignorant of the correspondence between agents and their positions in $\pi$.
Then, $\mathcal{M}(u_{\pi})$ means the ordered allocation returned by $\mathcal{M}$ when the input is $u_{\pi}$.
See Figure~\ref{fig:mechanism-overview} for an illustration.

We now define a fairness concept of a mechanism concerning agent orderings, called \emph{position envy-freeness}. 
This concept means that no agent envies the bundle under one agent ordering $\pi$ compared to that under another agent ordering $\pi'$.
\begin{definition}
    A mechanism $\mathcal{M}$ satisfies \emph{position envy-freeness} with respect to $U_{\ge 0}$ if for any profile $u\in U_{\ge 0}$, for any agent orderings $\pi,\pi' \in \Pi$, and for any agent $a \in N$,
    \begin{align*}\label{eq:pef}
        u_a\big(\mathcal{M}(u_{\pi})_{\pi(a)}\big)
        \geq
        u_a\big(\mathcal{M}(u_{\pi'})_{\pi'(a)}\big).
    \end{align*}
\end{definition}
As mentioned in Section~\ref{sec:Introduction}, this concept is already known as \emph{meta-envy-freeness}~\citep{ManabeOkamoto2012} in fair division for divisible goods.

Unfortunately, a position envy-free mechanism may not exist for indivisible goods.
Consider a setting with two agents and a single good, where both agents have identical utility functions.
The input of a mechanism is an ordered tuple of the same two utility functions.
Without knowing the correspondence between agents and their positions, the mechanism must allocate the good consistently to a fixed position (e.g., position 1).
This leads to a violation of position envy-freeness since they receive the good when assigned to position 1 but not when assigned to position 2.

Following the spirit of EF1, we introduce a relaxation of the position envy-freeness, called \emph{position envy-freeness up to one good} (PEF1). 
This relaxation allows for some degree of position-based disparity, bounded by the utility of at most one good. See also Figure~\ref{fig:mechanism-overview}.
\begin{definition}
    A mechanism $\mathcal{M}$ satisfies \emph{position envy-freeness up to one good} (PEF1) with respect to $U_{\ge 0}$ if for any profile $u\in U_{\ge 0}$, agent orderings $\pi, \pi' \in \Pi$, and agent $a\in N$ with $\mathcal{M}(u_{\pi'})_{\pi'(a)} \neq \emptyset$, there exists a good $g \in \mathcal{M}(u_{\pi'})_{\pi'(a)}$ such that
    \[
        u_a\big(\mathcal{M}(u_{\pi})_{\pi(a)}\big) \geq u_a\big(\mathcal{M}(u_{\pi'})_{\pi'(a)} \setminus \{g\}\big).
    \]
\end{definition}
When we write PEF1 without further specification, we mean PEF1 with respect to $U_{\ge 0}$.

\begin{figure}[t]
    \centering
    \begin{tikzpicture}[
        node distance=2cm,
        box/.style={draw, minimum width=2.5cm, minimum height=0.8cm},
        profile/.style={draw, rounded corners, minimum width=3.5cm, minimum height=0.8cm},
        allocation/.style={draw, minimum width=3.5cm, text width=3.2cm, minimum height=1.0cm, align=center},
        arrow/.style={->, thick}
    ]
    \node[profile] (profile) at (0,3) {Profile $u = \{u_a\}_{a \in N}$};

    \node[box] (profile1) at (-2,1.6) {Ordered profile $u_{\pi}$};
    \node[box] (profile2) at (2,1.6) {Ordered profile $u_{\pi'}$};

    \draw[arrow] (profile) -- node[left=0.3cm] {$\pi$} (profile1);
    \draw[arrow] (profile) -- node[right=0.3cm] {$\pi'$} (profile2);

    \node[allocation] (alloc1) at (-2,0) {Ordered allocation $\mathcal{M}(u_{\pi})$};
    \node[allocation] (alloc2) at (2,0) {Ordered allocation $\mathcal{M}(u_{\pi'})$};

    \draw[arrow] (profile1) -- node[left] {$\mathcal{M}$} (alloc1);
    \draw[arrow] (profile2) -- node[right] {$\mathcal{M}$} (alloc2);

    \node[allocation] (final1) at (-2,-1.6) {Allocation $\{\mathcal{M}(u_{\pi})_{\pi(a)}\}_{a\in N}$};
    \node[allocation] (final2) at (2,-1.6) {Allocation $\{\mathcal{M}(u_{\pi'})_{\pi'(a)}\}_{a\in N}$};
    \draw[arrow] (alloc1) -- node[left] {} (final1);
    \draw[arrow] (alloc2) -- node[right] {} (final2);

    

    \draw[dashed] (-4.0,-2.3) rectangle (4.0,-0.9);
    \node[right] at (-3.7,-2.6) {PEF1};

    \end{tikzpicture}
    \caption{An illustration of a mechanism $\mathcal{M}$ and fairness concepts.}
    \label{fig:mechanism-overview}
\end{figure}

Additionally, we define the scale-invariance of mechanisms.
For a profile $u$ and a tuple of positive real numbers $\alpha={(\alpha_a)}_{a \in N} \in \mathbb{R}_{>0}^n$ indexed by agents in $N$, let $\alpha u$ denote the profile defined as $(\alpha u)_a(S)=\alpha_a \cdot u_{a}(S)$ for every $a \in N$ and $S\subseteq M$. 
A mechanism $\mathcal{M}$ is called \emph{scale-invariant} if $\mathcal{M}(u_\pi) = \mathcal{M}((\alpha u)_\pi)$ for every profile $u$, tuple $\alpha \in \mathbb{R}_{>0}^n$, and agent ordering $\pi \in \Pi$.

To explain these concepts, 
we present the round-robin mechanism~\citep{Caragiannis2019} described in Algorithm~\ref{alg:RR-mechanism}.
The mechanism first gives a total order of the goods for tie breaking.
Then it operates by having agents sequentially select their most preferred remaining good, following the order specified by $\pi^{-1}(1)$ to $\pi^{-1}(n)$.  
Ties are broken by choosing the good with the smallest order.
This process continues until all goods are allocated. 
The round-robin mechanism possesses two important properties: it is scale-invariant and outputs EF1 allocation for any profile and agent ordering~\citep{Caragiannis2019}.

\begin{algorithm}[h]\caption{Round-Robin Mechanism}
    \label{alg:RR-mechanism}
    \begin{algorithmic}[1]
        \REQUIRE Ordered profile $(u_1, u_2, \dotsc, u_n)$
        \ENSURE  Ordered allocation $ (A_1, A_2,\dotsc, A_n)$
        \STATE Fix indices of goods.
        \STATE $A_i \leftarrow \emptyset$ for all $i \in [n]$
        \STATE $i \gets 1$
        \WHILE{$A_1 \cup A_2 \cup \dotsb \cup A_n \subsetneq M$}
            \STATE Take $g \in \mathrm{argmax}_{g'\in M \setminus (A_1 \cup A_2 \cup \dotsb \cup A_n)}\, u_i(g')$\label{code:argmax}
            \STATE $A_i  \leftarrow A_i  \cup \{g\}$
            \STATE $i \gets (i \bmod n) + 1$
        \ENDWHILE \label{code:for-end-RR}
        \RETURN $(A_1, A_2, \dotsc, A_n)$
    \end{algorithmic}
\end{algorithm}
While the round-robin mechanism gives an EF1 allocation, it does not satisfy PEF1.
To illustrate this, consider an instance with four agents, five goods, and a profile $u$ as shown in Table~\ref{table:table_round_robin} with positive values $x>y>z>0$.
Consider two agent orderings $\pi, \pi' \in \Pi$ such that $\pi(a_i)=i$ and $\pi'(a_i) = 5-i$ for each $i\in \{1,2,3,4\}$.
Let $\mathcal{M}$ be the round-robin mechanism and let $A=\mathcal{M}(u_{\pi})$ and $B=\mathcal{M}(u_{\pi'})$.
Under $\pi$, agent $a_1$ receives $A_{\pi(a_1)}=\{g_1,g_5\}$, while under $\pi'$, agent $a_1$ receives $B_{\pi'(a_1)}=\{g_4\}$.
For any $g \in A_{\pi(a_1)}$, we have $u_{a_1}(A_{\pi(a_1)} \setminus \{g\}) > z =  u_{a_1}(B_{\pi'(a_1)})$, which violates PEF1.

\begin{table}[H]
\begin{center}
    \begin{tabular}{c|c|c|c|c|c|c}
        &$g_1$ & $g_2$ & $g_3$ & $g_4$ & $g_5$\\
          \hline
            Agent $a_1$ & $x$ & $0$ &$0$ & $z$ & $y$&\\
          \hline
            Agent $a_2$ & $0$ & $x$ & $0$ & $0$ & $y$&\\
          \hline
            Agent $a_3$ & $x$ & $0$ & $y$ &$0$  & $0$&\\
          \hline
            Agent $a_4$ & $0$ & $x$ & $z$ & $y$ & $0$&\\
    \end{tabular}
\end{center}
\caption{An example utility profile where the round-robin mechanism violates PEF1}
\label{table:table_round_robin}
\end{table}

More generally, for the round-robin mechanism $\mathcal{M}$, if $\lceil \frac{m}{n} \rceil \ge \lfloor \log_2 n \rfloor$, there exist a profile $u$ and agent orderings $\pi,\pi'$ such that even after removing any $\lfloor \log_2 n \rfloor-1$ goods from $\mathcal{M}(u_{\pi})$, the agent still prefers $\mathcal{M}(u_{\pi'})$ (Theorem~\ref{thm:degree_envy_round-robin}). 
Further discussion of the round-robin mechanism can be found in Section~\ref{sec:round-robin}.

\section{Existence of a Scale-Invariant PEF1 Mechanism Producing an EF1 Allocation}\label{sec:EF1_PEF1}

In this section, we present our main result.
\begin{theorem}\label{theorem:matching_draft}
    There exists a scale-invariant, PEF1 mechanism that always produces an EF1 allocation in polynomial time. 
\end{theorem}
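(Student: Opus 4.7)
The plan is to verify the three claims of Theorem~\ref{theorem:matching_draft} separately, leaning on the structural observation from Section~\ref{sec:EF1_PEF1} that Mechanism~\ref{alg:matching-algorithm} proceeds in rounds, where in each round a set of goods is selected via a maximum-weight bipartite matching between agents and the currently remaining goods, with a carefully specified tie-breaking rule.

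The polynomial running time is immediate because each round solves one maximum-weight bipartite matching on $O(n+m)$ vertices and there are at most $\lceil m/n\rceil$ rounds. For scale invariance, the key property is that the matching's edge weights $w(a,g)$ should depend only on the rank of $g$ in agent $a$'s preference order over the currently remaining goods (for instance, $|M_r|-\operatorname{rank}_a(g)+1$), and the tie-breaking rule should reference only good identifiers rather than utility magnitudes. Multiplying $u_a$ by any positive $\alpha_a$ then preserves $a$'s ranking, so all weights, the chosen matching, and the final output coincide for $u$ and $\alpha u$.

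For EF1, I would adapt the standard round-robin telescoping argument to the matching setting. Fix agents $a$ and $a'$. In each round $r$, I plan to show that the good $a$ receives has $u_a$-value at least as large as the good $a'$ receives in round $r+1$, since the maximum-weight matching picks, in effect, the top-$n$ goods in a rank-balanced manner. Pairing $a$'s round-$r$ pick with $a'$'s round-$(r{+}1)$ pick and discarding $a'$'s round-$1$ good $g^*$ then yields the EF1 bound $u_a(A_a)\ge u_a(A_{a'}\setminus\{g^*\})$.

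The main obstacle is PEF1. The strategy has two layers. First, I would establish that the multiset $G_r$ of goods allocated in round $r$ is independent of the agent ordering $\pi$: because the weight function is symmetric in agent labels (rank-based weights treat agents by their utility functions, not by their positions) and because the tie-breaking rule can be chosen to depend only on good identifiers, the canonical optimal matching in round $r$ covers the same set $G_r$ under any $\pi$. Once $G_r$ is ordering-independent, the comparison between the two executions under $\pi$ and $\pi'$ reduces to comparing the intra-$G_r$ assignment of goods to agents. For this, I would run an exchange/augmenting-path argument on the symmetric difference of the two matchings round-by-round and show that the per-round losses for agent $a$ telescope, so that the total utility gap $u_a(\mathcal{M}(u_{\pi'})_{\pi'(a)})-u_a(\mathcal{M}(u_{\pi})_{\pi(a)})$ is bounded by $u_a(g)$ for a single good $g$. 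Ensuring that the telescoping is tight to exactly one good rather than accumulating with the number of rounds is the crux of the proof, and it is where the specific form of the weights and tie-breaking in Mechanism~\ref{alg:matching-algorithm} must be used in an essential way.
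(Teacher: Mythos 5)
Your plan for running time, scale invariance, and EF1 is essentially the paper's own route (rank-based weights are invariant under positive scaling; the EF1 telescoping works once one justifies, via an exchange argument, that the good $a$ gets in round $r$ is at least as valuable to $a$ as any good allocated in round $r+1$). The PEF1 part, however, has a genuine gap. You correctly reduce to the ordering-invariance of the set of goods allocated in each round (the paper obtains this from the $w_2$ component: any two maximum-weight matchings with respect to $w$ have equal $w_2$-sums, and since $w_2(a,g_j)=2^{m-j}$ these sums are binary encodings of the matched good set, so the set is unique). But your second layer --- an augmenting-path argument on the symmetric difference of the two round-$r$ matchings with ``per-round losses telescoping'' --- is not substantiated and, as stated, does not yield a one-good bound: only the matched good \emph{set} is ordering-invariant, not the assignment, so within a single round agent $a$ may receive a strictly worse good under $\pi'$ than under $\pi$, and this can happen in every round; intra-round comparisons alone therefore let the loss accumulate with the number of rounds. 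You explicitly flag this as the unresolved crux, which is precisely the step the theorem needs.

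The missing idea is a cross-round, shift-by-one comparison rather than an intra-round one. Since the allocated good sets of rounds $1,\dots,r$ coincide under both orderings, the good $g_a^{r+1}$ that $a$ receives in round $r+1$ under $\pi$ is still unallocated at round $r$ under $\pi'$. If $w_1\bigl(a,g_a^{r+1}\bigr) > w_1\bigl(a,h_a^{r}\bigr)$, then rematching $a$ to $g_a^{r+1}$ would strictly increase the weight of the round-$r$ matching under $\pi'$ (the factor $nm2^m$ makes $w_1$ dominant), contradicting its maximality; hence $u_a\bigl(h_a^{r}\bigr) \ge u_a\bigl(g_a^{r+1}\bigr)$ for every $r$. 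Summing over rounds with the index shift drops exactly agent $a$'s round-$1$ good and gives
\[
u_a\bigl(\mathcal{M}(u_{\pi'})_{\pi'(a)}\bigr) \;\ge\; u_a\bigl(\mathcal{M}(u_{\pi})_{\pi(a)} \setminus \{g_a^{1}\}\bigr),
\]
which is PEF1. The same exchange argument, applied within a single ordering to agents $a$ and $a'$, is also what rigorously backs the ``rank-balanced'' claim you invoke in your EF1 step.
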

We will prove Theorem \ref{theorem:matching_draft} by presenting Algorithm~\ref{alg:matching-algorithm}, which constructs a maximum-weight matching iteratively.
This mechanism is similar to one proposed by \citet{Brustle2020} in the context of fair division with subsidy. Our mechanism is distinguished by its scale invariance and the tie-breaking method.

In the mechanism, we first give an arbitrary total order of the goods $M$. Let $g_1,\ldots,g_m$ be the goods aligned according to this order.
We then ensure that $m$ is divisible by $n$ by adding dummy goods valued at zero by all agents if necessary. Initially, set $A_i=\emptyset$ for every position $i\in [n]$, and let $I$ be a set of all unallocated goods.

The core of the mechanism consists of $\frac{m}{n}$ rounds (the for loop of Lines~\ref{code:for_start}-\ref{code:for_end}).
Let $G=([n]\cup M,E)$ denote a complete bipartite graph with two disjoint vertex sets $[n]$ and $M$, where $E=\{\{i,g\}\mid i\in [n],g\in M\}$.
In each round $r$, we consider the remaining subgraph $G_r =([n]\cup I,E_r)$, where $E_r=\{\{i,g\}\mid i\in [n],g\in I\}$.

\begin{algorithm}[H]
    \caption{A Scale-Invariant and PEF1 Mechanism Producing an EF1 Allocation}
    \label{alg:matching-algorithm}
    \begin{algorithmic}[1]
        \REQUIRE Ordered profile $(u_1, u_2, \dotsc, u_n)$
        \ENSURE  Ordered allocation $ (A_1, A_2,\dotsc, A_n)$
        \STATE Fix indices of goods as $M = \{g_1,g_2,\ldots,g_m\}$.
        \STATE Add dummy goods until $m$ is divisible by $n$.
        \STATE $A_i \leftarrow \emptyset$ for all $i \in [n]$ and $I \leftarrow M$
        \FOR{$r=1$ to $\lceil \frac{m}{n} \rceil$} \label{code:for_start}
            \STATE Compute a maximum-weight matching $\mu_r$ with respect to $w$ defined by the equation~\eqref{eq:weight_function} in $G_r=([n]\cup I,E_r)$.\label{code:compute_max_matching}
            \STATE Let $\mu_r(i)$ denote the good in $I$ matched with $i\in [n]$ under $\mu_r$.
            \FOR{$i=1$ to $n$}
                \STATE $A_i \leftarrow A_i \cup \{\mu_r(i)\}$
                \STATE $I \leftarrow I \setminus \{\mu_r(i)\}$
            \ENDFOR
        \ENDFOR \label{code:for_end}
        \RETURN $(A_1,A_2, \dotsc, A_n)$
    \end{algorithmic}
\end{algorithm}

The mechanism utilizes a weight function $w: [n] \times M \to \mathbb{R}_{\ge 0}$ on $E$ defined by 
\begin{equation}\label{eq:weight_function}
    w(i,g)= 2^{m+1}n w_1(i,g) + w_2(i,g)
\end{equation}
for each edge $\{i,g\}\in E$. 
Here, two weight functions $w_1$ and $w_2$ are defined as follows.

The first weight function $w_1$ is based on utilities.
For each good $g$ and the agent in each position $i$, we define \emph{rank} $R(i,g)$ as follows: $R(i,g)=k$ if $g$ has the $k$th highest utility among all goods in $M$ for the agent in position $i$.
If multiple goods have the same utility, they are assigned the same rank, and the next rank is assigned as if no ties occurred.
For example, if four goods $g_1,g_2,g_3,g_4$ have utilities 10, 10, 8, and 7, respectively, then $R(i,g_1)=R(i,g_2)=1,R(i,g_3)=2$, and $R(i,g_4)=3$.
For each $i\in [n]$ and $g\in M$, we define
$$
w_1(i,g) = m-R(i,g).
$$
The second weight function $w_2$ is based on the total order of the goods: for each $i\in [n]$ and $\ell\in [m]$, we define 
$$
w_2(i,g_\ell) = 2^{m-\ell}.
$$
Additionally, recall the fundamental concept from matching theory. A subset of edges is a \emph{matching} if no two edges in the subset share a common vertex.
For a matching $\mu$, let $w(\mu)$ denote the total weight of the matching with respect to $w$.
Let $\mu(i)$ denote the good matched with $i$ under $\mu$.

In each $r$th round, the mechanism computes a maximum-weight matching $\mu_r$ with respect to $w$ in $G_r$ (Line~\ref{code:compute_max_matching}).
While there may be multiple maximum-weight matchings, any such matching can be selected. Then, according to $\mu$, goods are allocated to each position.
As we will show in Lemma~\ref{lemma:weight_function}, maximizing $w$ ensures that we first maximize $w_1$, then $w_2$ among matchings maximizing $w_1$. 


To illustrate the behavior of Algorithm~\ref{alg:matching-algorithm}, we apply it to the same instance from Table~\ref{table:table_round_robin}. Recall the profile with four agents and five goods with utilities satisfying $x > y > z > 0$. For this profile, the edge weights of the bipartite graph are given as in Table~\ref{table:table_matching_alg}.
\begin{table}[H]
\begin{center}
    \begin{tabular}{c|c|c|c|c|c|c}
        &$g_1$ & $g_2$ & $g_3$ & $g_4$ & $g_5$\\
          \hline
            Agent $a_1$ & 1040 & 264 &260 & 514 & 769 &\\
          \hline
            Agent $a_2$ & 528 & 1032 & 516 & 514 & 769 &\\
          \hline
            Agent $a_3$ & 1040 & 520 & 772 & 514  & 513\\
          \hline
            Agent $a_4$ & 272 & 1032 & 516 & 770 & 257 &\\
    \end{tabular}
\end{center}
\caption{Edge weights for the bipartite graph in Algorithm~\ref{alg:matching-algorithm}}
\label{table:table_matching_alg}
\end{table}
For any agent ordering $\pi$, in the bipartite graph, there exists a unique maximum-weight matching $\{(\pi(a_1),g_1), (\pi(a_2),g_2), (\pi(a_3),g_3), (\pi(a_4),g_4)\}$. This means that regardless of which agent ordering is chosen, each agent receives the same good in round 1.
In the second round, only good $g_5$ remains unallocated. 
Here, the maximum-weight matchings between the remaining good and agents are $\{(\pi(a_1), g_5)\}$ and $\{(\pi(a_2), g_5)\}$.
The algorithm selects either maximum-weight matching. For instance, we can choose to prioritize the agent with the smaller position value under $\pi$. 
Although agent $a_1$ experiences position envy between different orderings, this envy is bounded by at most one good, satisfying PEF1.

%

\subsection{Proof of Theorem~\ref{theorem:matching_draft}}\label{sec:main-proof}

The following lemma shows that a single weight function can achieve lexicographical maximization.
\begin{lemma}\label{lemma:weight_function}
    A maximum-weight matching of $G_r$ with respect to $w$ maximizes $w_1$ first, and among all such matchings, maximizes $w_2$.
\end{lemma}
\begin{proof}
Let $\mu$ and $\nu$ be any two matchings, and let $w_i(\mu)$ and $w_i(\nu)$ denote the total weights of $\mu$ and $\nu$ with respect to $w_i$ for $i\in\{1,2\}$. 
Since each matching contains at most $n$ edges, and $w_2(e) \le 2^m$ for all $e\in E$, we have $w_2(\mu) - w_2(\nu) \ge - 2^{m}n$.
If $w_1(\mu) > w_1(\nu)$, then we get $w_1(\mu) - w_1(\nu) \ge 1$ and
\begin{align*}
w(\mu) - w(\nu) 
= 2^{m+1}n (w_1(\mu) - w_1(\nu)) + (w_2(\mu) - w_2(\nu))  
\geq 2^{m+1}n - 2^{m}n  > 0.
\end{align*}
This implies that a maximum-weight matching with respect to $w$ must maximize $w_1$.
If $w_1(\mu) = w_1(\nu)$ and $w_2(\mu) > w_2(\nu)$, then $w(\mu) > w(\nu)$. Thus, among matchings maximizing $w_1$, a maximum-weight matching with respect to $w$ must maximize $w_2$.
\end{proof}

For each agent ordering $\pi\in \Pi$ and each $r\in \{1,2,\ldots,\lceil \frac{m}{n} \rceil \}$, let $\mu_{\pi,r}$ denote a maximum-weight matching in $G_r$ with respect to $w$ computed in $r$th round when the input is $u_{\pi}$.
Let $\Gamma_{\pi,r}$ be the set of goods matched by $\mu_{\pi,r}$.
For agent $a\in N$, we denote by $g_{\pi,r,a}$ the good matched with position $\pi(a)$ under $\mu_{\pi,r}$. 
By the definition of $w_2$, we can show that the set $\Gamma_{\pi,r}$ is determined independently of the agent ordering $\pi$.
\begin{lemma}\label{lemma:Gamma_independent}
    For every round $r\in \{1,2,\ldots,\lceil \frac{m}{n} \rceil \}$ and any pair of agent orderings $\pi,\pi'\in \Pi$, we have $\Gamma_{\pi,r}=\Gamma_{\pi',r}$.
\end{lemma}
\begin{proof}
    We prove the lemma by induction on $r$.
    Fix any ordering $\pi$ and, for $r=1$, suppose there exist two maximum-weight matchings $\mu_{\pi,1}$ and $\mu'_{\pi,1}$ with respect to $w$. 
    By the definition of $w$, these matchings must have the same $w_2$ weight sum.
    Since $w_2$ uses powers of $2$ based on the goods' indices, this equality implies that $\mu$ and $\mu'$ match the same set of goods. 
    Thus, $\Gamma_{\pi,1}$ is unique.
    Moreover, since $w_2$ does not depend on positions, $\Gamma_{\pi,1}$ is independent of $\pi$. 
    The same argument applies inductively for each subsequent round $r > 1$, completing the proof.
\end{proof}
Finally, we prove Theorem~\ref{theorem:matching_draft}.
\begin{proof}[Proof of Theorem~\ref{theorem:matching_draft}]
    Let $\mathcal{M}$ denote Algorithm~\ref{alg:matching-algorithm}.
    We first prove that $\mathcal{M}$ satisfies PEF1. 
    To this end, we compare any pair of two agent orderings $\pi,\pi'\in \Pi$. 
    By Lemma~\ref{lemma:Gamma_independent}, 
    $\Gamma_{\pi,r} = \Gamma_{\pi',r}$ for all $r=1,2,\ldots, \lceil \frac{m}{n} \rceil$.
    This implies that $g_{\pi, r+1, a} \notin \bigcup_{r'=1}^r \Gamma_{\pi, r'} = \bigcup_{r'=1}^r \Gamma_{\pi', r'}$ for any $r=1,2,\ldots, \lceil \frac{m}{n} \rceil-1$ and any agent $a\in N$.
    By Lemma~\ref{lemma:weight_function}, matching $\mu_{\pi', r}$ is a maximum-weight matching with respect to $w_1$.
    Thus, for any agent $a\in N$, we obtain $w_1(\pi(a), g_{\pi, r+1, a}) \le w_1(\pi'(a), g_{\pi', r, a})$ since otherwise good $g_{\pi, r+1, a}$ is included in the maximum-weight matching $\mu_{\pi', r}$.
    By the definition of $w_1$, this implies $u_a\bigl(g_{\pi, r+1, a}\bigr) \le u_a\bigl(g_{\pi', r, a}\bigr)$.
    This leads that
    \begin{align*}
        &u_a\big(\mathcal{M}(u_{\pi'})_{\pi'(a)}\big) 
        = \sum_{r=1}^{\lceil \frac{m}{n} \rceil} u_a\bigl(g_{\pi',r,a}\bigr) \ge \sum_{r=1}^{\lceil \frac{m}{n} \rceil - 1} u_a\bigl(g_{\pi',r,a}\bigr) \\
        &\ge \sum_{r=2}^{\lceil \frac{m}{n} \rceil} u_a\bigl(g_{\pi, r, a}\bigr) = u_a\big(\mathcal{M}(u_{\pi})_{\pi(a)} \setminus \{g_{\pi, 1, a}\}\big),
    \end{align*}
    which implies that the mechanism is PEF1.

    Next, we show that the mechanism always produces an EF1 allocation. 
    Fix any agent ordering $\pi$. 
    Since we choose a maximum-weight matching with respect to $w_1$, we have $u_a\bigl(g_{\pi,r+1,a'}\bigr) \leq u_a\bigl(g_{\pi, r, a}\bigr)$ for any two agents $a,a' \in N$ and $r=1,2,\ldots,\lceil \frac{m}{n} \rceil -1$. 
    Then, for all $a,a' \in N$, we have
    \begin{align*}
        &u_a\big(\mathcal{M}(u_{\pi})_{\pi(a)}\big) 
        = 
            \sum_{r=1}^{\lceil \frac{m}{n} \rceil} u_a\bigl(g_{\pi, r, a}\bigr)
        \ge  
            \sum_{r=1}^{\lceil \frac{m}{n} \rceil -1} u_a\bigl(g_{\pi, r, a}\bigr) \\
        &\ge   
            \sum_{r=2}^{\lceil \frac{m}{n} \rceil} u_a\bigl(g_{\pi, r, a'}\bigr) 
        = 
            u_a\big(\mathcal{M}(u_{\pi})_{\pi(a')}  \setminus \{g_{\pi, 1, a'}\}\big).
    \end{align*}

    We finally consider the time complexity and scale-invariance of the mechanism.
    In each round, we can find a maximum-weight matching with respect to $w$ in polynomial time~\citep{LovaszPlummer2009}.
    Thus, the mechanism runs in polynomial time.
    Furthermore, the weight function $w$ is unchanged if the profile is multiplied by a tuple of positive reals. Therefore, the mechanism is scale-invariant.
\end{proof}

\section{The Case of Two Agents}\label{sec:two_agents}
In this section, we focus on the case of two agents. 

\subsection{Maximize Nash Welfare}

The \emph{Nash welfare} of an allocation $A=\{A_a\}_{a\in N}$ is defined as $\mathrm{NW}(A) = \left(\prod_{a\in N} u_a(A_a) \right)^{1/n}$. 
An allocation $A$ is said to be \emph{maximum Nash welfare (MNW)} if it maximizes $\mathrm{NW}(A)$ among all allocations. 
Let $U_{>0}$ be the class of profiles where $u_a: 2^M \to \mathbb{R}_{>0}$ for all agents $a \in N$.

We will show that PEF1 can be achieved by a mechanism that maximizes the Nash welfare for two agents. 
To this end, we first prove the following theorem.
\begin{theorem}\label{theorem:MNW}
    When $n=2$,
    for any $u\in U_{>0}$, any two MNW allocations $A$ and $B$, and any agent $a\in N$, if $B_a\neq \emptyset$, then there exists $g\in B_a$ such that $u_a(A_a) \ge u_a(B_a \setminus \{g\})$.
\end{theorem}

\begin{proof}
    Let $N=\{a_1,a_2\}$ denote the set of two agents.
    Let $A= \{A_{a_1},A_{a_2}\}$ and $B= \{B_{a_1},B_{a_2}\}$ be two distinct MNW allocations.

    We show that for agent $a_1$, if $B_{a_1}\neq \emptyset$, there exists some good $g\in B_{a_1}$ such that $u_{a_1}(A_{a_1})\ge u_{a_1}(B_{a_1}\setminus g)$. The same argument can be applied to $a_2$. 
    Without loss of generality, we can assume that $B_{a_1} \setminus A_{a_1} \neq \emptyset$.
    Suppose, towards a contradiction, that $u_{a_1}(A_{a_1})< u_{a_1}(B_{a_1}\setminus \{g\})$ for every $g\in B_{a_1}\setminus A_{a_1}$.
    Take a good $h\in B_{a_1}\setminus A_{a_1}$ (note that $h\in A_{a_2}$).
    Consider allocations $A'$ and $B'$ where $A'_{a_1}=A_{a_1}\cup \{h\}$, $A'_{a_2}=A_{a_2}\setminus \{h\}$, $B'_{a_1}=B_{a_1}\setminus \{h\}$, and $B'_{a_2}=B_{a_2}\cup \{h\}$.
    
    We will show that $\mathrm{NW}(A') \mathrm{NW}(B')>\mathrm{NW}(A) \mathrm{NW}(B)$, contradicting the optimality of $A$ and $B$.
    Observe that
    \begin{align*}
            \dfrac{\mathrm{NW}(A')^2}{\mathrm{NW}(A)^2} \cdot \dfrac{\mathrm{NW}(B')^2}{\mathrm{NW}(B)^2}
        &=  
            \dfrac{u_{a_1}(A_{a_1}')}{u_{a_1}(A_{a_1})} \cdot\dfrac{u_{a_2}(A_{a_2}')}{u_{a_2}(A_{a_2})} \cdot\dfrac{u_{a_1}(B_{a_1}')}{u_{a_1}(B_{a_1})} \cdot\dfrac{u_{a_2}(B_{a_2}')}{u_{a_2}(B_{a_2})}\\
        &=
            \dfrac{\left(u_{a_1}(A_{a_1})+u_{a_1}(h)\right) \left(u_{a_1}(B_{a_1})-u_{a_1}(h)\right)}{u_{a_1}(A_{a_1}) u_{a_1}(B_{a_1})}\\
        &\quad
            \cdot \dfrac{\left(u_{a_2}(A_{a_2})-u_{a_2}(h)\right) \left(u_{a_2}(B_{a_2})+u_{a_2}(h)\right)}{u_{a_2}(A_{a_2}) u_{a_2}(B_{a_2})}\\
        &=
            \left(1+\dfrac{u_{a_1}(h)(u_{a_1}(B_{a_1})-u_{a_1}(A_{a_1})-u_{a_1}(h))}{u_{a_1}(A_{a_1})u_{a_1}(B_{a_1})}\right) \\
        &\quad
        \cdot \left(1+\dfrac{u_{a_2}(h)(u_{a_2}(A_{a_2})-u_{a_2}(B_{a_2})-u_{a_2}(h))}{u_{a_2}(A_{a_2})u_{a_2}(B_{a_2})}\right).
    \end{align*}
    By our assumption, $u_{a_1}(A_{a_1})<u_{a_1}(B_{a_1}\setminus \{h\})$. This implies that the first term in the product on the right-hand side is strictly greater than $1$.

    Since $A$ is MNW, $\mathrm{NW}(A)\ge \mathrm{NW}(B')$, implying $u_{a_1}(A_{a_1})u_{a_2}(A_{a_2})\ge u_{a_1}(B_{a_1}\setminus \{h\}) u_{a_2}(B_{a_2}\cup \{h\})$.
    Combined with $u_{a_1}(A_{a_1})< u_{a_1}(B_{a_1}\setminus \{h\})$, we obtain $u_{a_2}(A_{a_2})\ge u_{a_2}(B_{a_2}\cup \{h\})$. Thus, the second term is at least $1$. 
    This yields $\mathrm{NW}(A')^2 \mathrm{NW}(B')^2>\mathrm{NW}(A)^2 \mathrm{NW}(B)^2$, a contradiction.
    \end{proof}
Consider a mechanism that returns an MNW allocation, breaking ties according to agents' positions.
This mechanism is scale-invariant by the definition of Nash welfare.
Furthermore, the resulting allocation satisfies EF1 and PO~\citep{Caragiannis2019}.
By Theorem~\ref{theorem:MNW}, we obtain the following theorem.
\begin{theorem}
    When $n=2$, any mechanism that returns an MNW allocation is scale-invariant and satisfies PEF1 with respect to $U_{>0}$. Moreover, such allocations are EF1 and PO.
\end{theorem}

\subsection{Adjusted Winner Mechanism}
For two agents, we prove the existence of a scale-invariant and PEF1 mechanism that always produces EF1 and PO in polynomial time by considering the adjusted winner mechanism~\citep{BramsTa96,AzizAdjustedWinner2015,AzizCaIg22}.

\begin{theorem}\label{thm:position-envy-adjusted-winner}
    When $n=2$, there exists a scale-invariant, PEF1 mechanism that always returns an EF1 and PO allocation in polynomial time.
\end{theorem}

\begin{algorithm}[h]
    \caption{Adjusted Winner Mechanism}
    \label{alg:AD-mechanism}
    \begin{algorithmic}[1]
        \REQUIRE Ordered profile $(u_1, u_2)$
        \ENSURE  Ordered allocation $ (A_1, A_2)$
        \STATE Fix total order of the goods.
        \STATE Normalize utilities.
        \STATE Let $A_1 = \{ g\in M \mid u_{1}(g) \ge 0 \land u_{2}(g)=0\}$ and $A_2=\{ g\in M \mid u_{1}(g)=0 \land u_{2}(g)>0\}$.
        \STATE Let $M^+ = \{ g\in M \mid u_{1}(g)>0 \land u_{2}(g)>0\}$.
        \IF{$M^+\neq \emptyset$}
            \STATE Arrange the goods in $M^+$ in non-increasing order based on their utility ratios: $\frac{u_{1}(g_1)}{u_{2}(g_1)} \ge  \cdots  \ge \frac{u_{1}(g_\ell)}{u_{2}(g_\ell)}$.
            \STATE Find $P_1 = \{g_1,\ldots,g_{k-1}\}$, $P_2 = \{g_{k+1},\ldots,g_\ell\}$ and $g_k$ with $\lambda_1$ and $\lambda_2$ (where $\lambda_1 + \lambda_2 = 1$ and $\lambda_1,\lambda_2\ge 0$) such that $\frac{1}{u_1(M^+)}\left(u_1(P_1) + \lambda_1 u_1(g_k) \right) = \frac{1}{u_2(M^+)}\left(u_2(P_2) + \lambda_2 u_2(g_k) \right)$.
            \IF{$\lambda_{1} \ge \lambda_{2}$}\label{line:if_lambda}
                \STATE $A_{1} \leftarrow A_{1} \cup P_{1} \cup \{g_k\}$ and $A_{2} \leftarrow A_{2} \cup P_{2} $
            \ELSE\label{line:else_lambda}
                \STATE $A_{1} \leftarrow A_{1} \cup P_{1} $ and $A_{2} \leftarrow A_{2} \cup P_{2} \cup \{g_k\}$
            \ENDIF
        \ENDIF
        \RETURN $(A_{1},A_{2})$
    \end{algorithmic}
\end{algorithm}
See Algorithm~\ref{alg:AD-mechanism}. The mechanism first fixes a total order of goods and partitions the goods into three sets: $A_1$ containing goods valued only by position 1, $A_2$ containing goods valued only by position 2, and $M^+$ containing goods positively valued by both agents.
We then arrange goods in $M^+$ in non-increasing order of utility ratios $\frac{u_1(g)}{u_2(g)}$, breaking ties by the indices of goods. We denote the sequence by $g_1,g_2,\ldots,g_{\ell}$ where $\ell=|M^+|$.

We consider dividing fractionally these ordered goods using a boundary line: goods to the left of the boundary are allocated to agent in position 1, and goods to the right are allocated to agent in position 2. 
Formally,
as we move a boundary from left to right, there exists a unique $k\in [\ell]$ and parameters $\lambda_1,\lambda_2$ where allocating bundle $P_1=\{g_1,g_2,\ldots,g_{k-1}\}$ to agent in position 1, bundle $P_2=\{g_{k+1},g_{k+2},\ldots,g_\ell\}$ to agent in position 2, and splitting good $g_k$ in proportions $\lambda_1,\lambda_2$ gives equal utility to both agents. 
Specifically, $\frac{1}{u_1(S)}\left(u_1(P_1) + \lambda_1 u_1(g_k) \right) = \frac{1}{u_2(S)}\left(u_2(P_2) + \lambda_2 u_2(g_k) \right)$.
Based on the comparison of $\lambda_1$ and $\lambda_2$, we allocate the boundary good $g_k$ entirely to one of the agents (see Lines~\ref{line:if_lambda} and~\ref{line:else_lambda} in Algorithm~\ref{alg:AD-mechanism}).

From Section 3 of~\citep{AzizAdjustedWinner2015}, we have the following lemma.
\begin{lemma}[\citep{AzizAdjustedWinner2015}]\label{lemma:Envy-free-AD}
    For every pair of positions $i,j\in \{1,2\}$, $u_{i}(P_i) + \lambda_{i} u_{i}(g_k) \ge u_{i}(P_{j}) + \lambda_{j} u_{i}(g_k)$. 
    Moreover, no partition of $M^+$ between the two agents can make one agent better off without making the other agent worse off compared to either $(P_{1}\cup \{g_k\}, P_{2})$ or $(P_{1}, P_{2}\cup \{g_k\})$.
\end{lemma}
Using this lemma, we prove Theorem~\ref{thm:position-envy-adjusted-winner}.
\begin{proof}[Proof of Theorem~\ref{thm:position-envy-adjusted-winner}]
Let $N=\{a_1,a_2\}$ be the set of two agents.
For each $a \in N$, let $M_a = \{ g\in M \mid u_a(g) > 0 \land u_{a'}(g)=0 \}$ be the set of goods valued only by agent $a$ where $a'$ denotes the other agent, and let $M_0 = \{ g\in M \mid u_{a_1}(g) = u_{a_2}(g)=0 \}$. 
    
We first prove that the mechanism is PEF1. The boundary line and the proportions $\lambda_1,\lambda_2$ are determined solely by the utility ratios, independently of agent orderings.
Bundles $P_1$ and $P_2$ are determined by the boundary line, which is independent of positions. For any agent $a \in N$ and agent orderings $\pi$ and $\pi'$, we denote by $P_a$ the fixed set $P_{\pi(a)} = P_{\pi'(a)}$ of goods in $M^+$.
Under ordering $\pi$, agent $a$ in position $\pi(a)$ receives either $M_a \cup P_a$, $M_a \cup M_0 \cup P_a$, $M_a \cup P_a \cup \{g_k\}$, or $M_a \cup M_0 \cup P_a \cup \{g_k\}$.
Since $u_a(M_0)=0$, agent $a$'s utility equals either $u_a(M_{a} \cup P_a)$ or $u_a(M_{a} \cup P_a \cup \{g_k\})$, establishing PEF1.

We now show that the mechanism always returns an EF1 allocation. 
Fix an agent ordering $\pi$. Without loss of generality, we assume that $\pi(a_1)=1$ and $\pi(a_2)=2$. 
When $\lambda_1 \geq \lambda_2$, we have
\begin{align*}
    &u_{a_1}(\mathcal{M}(u_\pi)_{\pi(a_1)}) 
    = u_{a_1}(M_{a_1} \cup P_{a_1} ) + u_{a_1}(g_k) \\
    &\ge u_{a_1}(M_{a_2} \cup P_{a_2} ) 
    + \lambda_{2} u_{a_1}(g_k) 
    + (1 - \lambda_{1}) u_{a_1}(g_k)\\
    &\geq u_{a_1}(M_{a_2} \cup P_{a_2} ) 
    = u_{a_1}(\mathcal{M}(u_\pi)_{\pi(a_2)}),
\end{align*}
where we use Lemma~\ref{lemma:Envy-free-AD} for the first inequality, and
\begin{align*}
    &u_{a_2}(\mathcal{M}(u_\pi)_{\pi(a_2)}) 
    = u_{a_2}(M_{a_2} \cup P_{a_2} ) \\
    &\geq u_{a_2}(M_{a_1} \cup P_{a_1} ) 
    + \lambda_{1} u_{a_2}(g_k) 
    - \lambda_{2} u_{a_2}(g_k) \\
    &\geq u_{a_2}(M_{a_1} \cup P_{a_1} ) 
    = u_{a_2}(\mathcal{M}(u_\pi)_{\pi(a_1)} \setminus \{g_k\}),
\end{align*}
where we use Lemma~\ref{lemma:Envy-free-AD} for the first inequality, and $\lambda_1\ge \lambda_2$ for the second inequality.
When $\lambda_1 < \lambda_2$, the resulting allocation can be proven to be EF1 by an analogous argument to the case of $\lambda_1 \ge \lambda_2$.

Next, we prove that the mechanism produces Pareto optimal allocations. 
Let $\{A_{a_1},A_{a_2}\}$ be the allocation induced from the ordered allocation produced by the mechanism. By Lemma~\ref{lemma:Envy-free-AD}, since goods in $M_a$ are valued only by agent $a$ for each $a \in N$, and goods in $M_0$ are valued by neither agent, any allocation that differs from $\{A_{a_1},A_{a_2}\}$ would make at least one agent worse off.
Finally, the mechanism clearly runs in polynomial time, and its scale-invariance follows from the normalization step.
\end{proof}

\section{Further Analysis for Round-Robin Mechanism}\label{sec:round-robin}

We now investigate the round-robin mechanism (recall Algorithm~\ref{alg:RR-mechanism}).
We refer to each iteration of the while loop in Algorithm~\ref{alg:RR-mechanism} as a \emph{round}, in which agents select their most preferred good from the remaining goods according to a fixed agent ordering.
As mentioned in Section~\ref{sec:Preliminaries}, for the round-robin mechanism, more goods need to be removed to eliminate position-based envy as $n$ increases.
\begin{theorem}\label{thm:degree_envy_round-robin}
    When $\lceil m/n \rceil \geq \lfloor \log_2 n \rfloor$, for the round-robin mechanism, there exists a profile and two agent orderings $\pi,\pi'$ where even after removing any $\lfloor \log_2 n \rfloor-1$ goods from their bundle under $\pi$, an agent prefers keeping their remaining bundle to receiving their bundle under $\pi'$.
\end{theorem}
We prove the theorem by constructing a specific profile. The detailed proof is deferred to the Appendix.
\begin{proof}[Proof Sketch]
    Consider two agent orderings $\pi, \pi' \in \Pi$ such that $\pi(a_i)=i$ and $\pi'(a_i) = n+1-i$ for each $i\in [n]$.
    The key idea is to construct utilities where agent $a_1$'s first $\lfloor \log_2 n \rfloor$ choices under $\pi$ have significantly higher value (value $C$) than the remaining choices, while carefully setting the utilities of other agents to ensure that under $\pi'$, agent $a_1$ cannot obtain any of these highly valued goods.
    This construction ensures that even after removing any $\lfloor \log_2 n \rfloor - 1$ goods from agent $a_1$'s bundle under $\pi$, at least one good of value $C$ remains, making this bundle more valuable than their bundle under $\pi'$.
\end{proof}
Theorem~\ref{thm:degree_envy_round-robin} implies that the round-robin mechanism is not PEF1 when $n\geq 4$ and $m \geq n+1$.
When $m \le n$, the mechanism outputs an allocation where each agent receives at most one good, thus ensuring PEF1.
Moreover, we prove that the round-robin mechanism is PEF1 for two or three agents. We defer the proof to the Appendix.
\begin{theorem}\label{thm:round-robin-n-2-3}
    When $n\in \{2,3\}$, the round-robin mechanism is PEF1.  
\end{theorem}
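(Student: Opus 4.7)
The plan is to prove PEF1 separately for $n = 2$ and $n = 3$, in each case tracking how much agent $a$'s bundle can differ across two orderings.

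For $n = 2$, there are only two orderings. Without loss of generality, I fix orderings $\pi$ and $\pi'$ under which a fixed agent $a$ is in position $1$ and position $2$, respectively. Let $b$ denote the other agent, and let $x_k$ and $y_k$ denote $a$'s $k$-th selection under $\pi$ and under $\pi'$. The central goal is to establish the interleaving inequality
\begin{align*}
u_a(x_k) \;\ge\; u_a(y_k) \;\ge\; u_a(x_{k+1})
\end{align*}
for every round $k$. The first inequality is immediate since $a$ picks before $b$ in each round of $\pi$, so $a$'s choice set under $\pi$ at round $k$ is never worse than under $\pi'$. The second inequality is proved by induction on $k$ via a case analysis on whether $b$'s round-$k$ pick under $\pi'$ coincides with $a$'s current favorite: in either case, the pool from which $a$ draws $y_k$ dominates the pool from which $a$ draws $x_{k+1}$ up to a single element no better than the current choice. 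Summing the interleaved inequalities yields $u_a(A_a^\pi) \ge u_a(A_a^{\pi'})$ and $u_a(A_a^{\pi'}) \ge u_a(A_a^\pi \setminus \{x_1\})$, which together establish PEF1.

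For $n = 3$, the six orderings interact in a more entangled way, so the plan, in the spirit hinted at by the paper, is to establish the structural invariant: for any two orderings $\pi, \pi'$ and every round $k$, the set $S_k(\pi)$ of goods selected in the first $k$ rounds satisfies $|S_k(\pi) \triangle S_k(\pi')| \le 2$. Intuitively, changing the ordering swaps at most one good into and one out of the picked pool at each round. Combined with a round-by-round interleaving analogous to the $n = 2$ case, this controls how $a$'s bundles differ across orderings: the goods in $A_a^{\pi'} \setminus A_a^\pi$ can be ``accounted for'' by dropping at most one good from $A_a^{\pi'}$ while the remaining utility is dominated by $u_a(A_a^\pi)$. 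Since the number of orderings is finite, it suffices to verify the invariant and the interleaving for each pair, and by symmetry we may reduce to the cases where $\pi$ and $\pi'$ differ by a single transposition.

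The main obstacle is establishing and exploiting the $n = 3$ structural invariant. A swap among the non-$a$ agents can cause one of them to choose a different good in some round, and this perturbation may ripple into later rounds as each agent's current favorite on the shrinking pool shifts. The hard part is showing that the ripple is bounded---rather than amplifying with $k$---which requires a careful case analysis based on whether the swapped-in or swapped-out good was among each agent's top remaining choices, and faithful respect of the index-based tie-breaking rule. Patching these cases into a uniform PEF1 guarantee for every pair of orderings is where the bulk of the technical work lies.
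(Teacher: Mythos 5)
Your overall strategy is the same as the paper's: track the set of goods picked up to each round under the two orderings, show it is stable up to a single controlled swap, deduce the round-wise interleaving inequalities $u_a(h_a^r)\ge u_a(g_a^{r+1})$ and $u_a(g_a^r)\ge u_a(h_a^{r+1})$, and telescope to get PEF1. Your $n=2$ argument is essentially the paper's Lemmas on $B^r$ versus $A^r$ (your ``first inequality is immediate'' gloss in fact needs the same inductive pool comparison, but the case analysis you describe supplies it). The problem is in the $n=3$ plan. First, the reduction ``by symmetry we may reduce to the cases where $\pi$ and $\pi'$ differ by a single transposition'' is unsound. Relabelling agents conjugates $\pi'\circ\pi^{-1}$ and hence preserves its cycle type, so a pair of orderings differing by a $3$-cycle (e.g.\ $(1,2,3)$ versus $(2,3,1)$) can never be mapped to a transposition pair; and you cannot recover these pairs by chaining through an intermediate ordering, because neither PEF1 nor the interleaving inequalities compose: two applications of ``lose one round/one good'' give only a PEF2-type bound. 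The correct reduction, implicitly used in the paper, is to relabel so that \emph{one} of the two orderings is the identity and then treat all five remaining orderings of the other, including the two $3$-cycles, directly; the paper's appendix shows these non-transposition orderings genuinely produce their own case structure.

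Second, the invariant you propose, $|S_k(\pi)\triangle S_k(\pi')|\le 2$, is too weak to drive the interleaving step. What the argument actually needs (and what the paper's Lemma for $n=3$ proves) is the refined statement that the prefix set under $\pi'$ equals the prefix set under $\pi$ with one specific \emph{current-round} good $g_{a_{i_1}}^{r}$ removed and one specific \emph{next-round} good $g_{a_{i_2}}^{r+1}$ added, with the admissible pairs $(i_1,i_2)$ depending on which ordering $\pi'$ is (the six cases (a)--(f)). Knowing merely that the symmetric difference has size at most two does not tell you that the extra good is a round-$(r+1)$ pick of the reference run nor which agent's round-$r$ good is missing, and both facts are used agent-by-agent to conclude $u_a(h_a^r)\ge u_a(g_a^{r+1})$ and $u_a(g_a^r)\ge u_a(h_a^{r+1})$ (e.g.\ to argue that a still-available good $g_a^{r+1}$ must be taken by $a$ itself, or that no remaining good beats $g_a^r$). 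So you would need to strengthen your invariant to the paper's form and carry out the per-ordering case analysis, including the $3$-cycle orderings, rather than appeal to the transposition reduction.
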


\section{Envy-Cycle Mechanism}\label{sec:envy-cycle}

Next, we study the envy-cycle mechanism, which always produces an EF1 allocation when agents have monotone utility functions~\citep{Lipton2004}. 
Here, $u_a$ is said to be \emph{monotone} if $u_a(S) \le u_a(T)$ for any $S \subseteq T \subseteq M$. We will show the mechanism may not satisfy PEF1. All proofs are presented in the Appendix.

To describe the mechanism, we define several concepts.
We say that $P=(P_1,P_2,\ldots, P_n)$ is an \emph{ordered partial allocation} if $\bigcup_{i\in [n]} P_i \subseteq M$ and $P_{i} \cap P_{i'} = \emptyset$ for all $i\neq i' \in [n]$.
For an ordered profile $u$ and a partial allocation $P=(P_1,P_2,\ldots, P_n)$, \emph{envy graph} is defined as a directed graph $G_P=([n],E)$, where the vertex set is $[n]$ and the edge set is $E = \{(i,i') \mid i, i' \in [n], i \neq i', u_{i}(P_i) < u_{i}(P_{i'})\}$.
An \emph{envy cycle} is a directed cycle in the envy graph, that is, a sequence of positions $(i_1, i_2, \ldots, i_\ell)$ such that $(i_k, i_{k+1}) \in E$ for all $k \in [\ell]$, where $i_{\ell+1} = i_1$.

In the envy-cycle mechanism (Algorithm~\ref{alg:EC-mechanism}), we first order the goods arbitrarily. 
For each good, we first eliminate all envy cycles in the partial allocation and then allocate it to an unenvied position. 
Specifically, while there exists an envy cycle in the current allocation, we resolve it by transferring bundles in the opposite direction of the cycle. 
After eliminating all cycles, we allocate the current good to a position not envied by any other. If multiple such positions exist, we choose the one with the smallest index.
%
%

\begin{algorithm}[H]
    \caption{Envy-Cycle Mechanism}
    \label{alg:EC-mechanism}
    \begin{algorithmic}[1]
        \REQUIRE Ordered profile $(u_1, u_2, \dotsc, u_n)$
        \ENSURE  Ordered allocation $ (A_1, A_2,\dotsc, A_n)$
        \STATE Fix indices of goods as $M = \{g_1,g_2,\ldots,g_m\}$.
        \STATE Set $A_i \leftarrow \emptyset$ for all $i \in [n]$.
        \FOR{$j=1,2,\ldots,m$}
            \STATE For partial allocation $A=\{A_i\}_{i\in [n]}$, construct the envy graph $G_A$.
            \WHILE{there exists an envy cycle in $G_A$}
                \STATE Resolve the envy cycle by transferring bundles in the opposite direction of the cycle.
            \ENDWHILE
            \STATE Let $i\in [n]$ be the vertex of in-degree $0$ in $G_A$ with the smallest index.
            \STATE Set $A_i  \leftarrow A_i  \cup \{g_j\}$.
        \ENDFOR
        \RETURN $(A_1, A_2,\dotsc, A_n)$
    \end{algorithmic}
\end{algorithm}

Similar to the round-robin mechanism, the envy-cycle mechanism does not satisfy PEF1 in general. 
\begin{theorem}\label{thm:degree_envy_envy-cycle}
    When $n \ge 2$, for the envy-cycle mechanism, there exists a profile and two agent orderings $\pi,\pi'$ where even after removing any $m - \left\lfloor\frac{m}{n}\right\rfloor -1 $ goods from their bundle under $\pi$, an agent prefers keeping their remaining bundle to receiving their bundle under $\pi'$.
\end{theorem}
Theorem~\ref{thm:degree_envy_envy-cycle} implies that if the envy-cycle mechanism is PEF1, then $m - \left\lfloor\frac{m}{n}\right\rfloor \leq 1$.
Moreover, we establish that this condition is also sufficient for PEF1. 
\begin{theorem}\label{thm:envy-cycle_if_and_only_if}
    The envy-cycle mechanism is PEF1 if $m - \left\lfloor\frac{m}{n}\right\rfloor \leq 1$. 
\end{theorem}

\section{Discussion}
This paper introduces a new fairness notion, PEF1, for mechanisms. 
We demonstrate a PEF1 mechanism producing an EF1 allocation for agents with additive utilities. 
For the case of two agents, we prove the existence of a scale-invariant, PEF1 mechanism that outputs an EF1 and PO allocation.
 
Several questions remain open for future research. 
While we have shown in Theorem~\ref{theorem:MNW} that for $n=2$, the utility difference between any pair of MNW allocations is bounded by some good's utility for each agent, the result for $n > 2$ remains unknown.
We conjecture this bound holds for any $n$. Additionally, the existence of a PEF1 mechanism producing an EF1 and PO allocation for any $n$ remains an open question.

\section*{Acknowledgments}
This work was partially supported by JST ERATO Grant Number JPMJER2301.
Ryoga Mahara acknowledges additional support from JSPS KAKENHI Grant Number JP23K19956.
Ryuhei Mizutani acknowledges additional support from JSPS KAKENHI Grant Number JP23KJ0379 and JST SPRING Grant Number JPMJSP2108.
Taihei Oki acknowledges additional support from JST FOREST Grant Number JPMJFR232L.
We thank the anonymous AAAI 2026 reviewers for their valuable feedback.

\bibliographystyle{plainnat}
\bibliography{arxivsubmit_20251108}

@STRING{proc = {Proceedings of the }}

@STRING{ijcai = { International Joint Conference on Artificial Intelligence (IJCAI)}}

@STRING{aaai = { AAAI Conference on Artificial Intelligence (AAAI)}}

@STRING{acmecold = { ACM Conference on Electronic Commerce (EC)}}

@STRING{springer = {Springer-Verlag}}

@STRING{acm = {ACM Press}}

@article{DubinsSpanier1961,
    author={Lester E. Dubins and Edwin H. Spanier},
    title={How to cut a cake fairly},
    journal={The American Mathematical Monthly},
    volume={68},
    number={1},
    pages={1--17},
    year={1961},
}

@article{BeiHuzhangSuksompong2020,
    author={Xiaohui Bei and Guangda Huzhang and Warut Suksompong},
    title={Truthful fair division without free disposal},
    journal={Social Choice and Welfare},
    volume={55},
    pages={523--545},
    year={2020},
}

@article{ManabeOkamoto2012,
    author={Yoshifumi Manabe and Tatsuaki Okamoto},
    title={Meta-envy-free cake-cutting and pie-cutting protocols},
    journal={Journal of Information Processing},
    volume={20},
    number={3},
    pages={686--693},
    year={2012},
}

@article{Foley1967,
    author = {Duncan K. Foley},
    title = "{Resource allocation and the public sector}",
    journal = {Yale Economic Essays},
    volume = {7},
    pages = {45--98},
    year = {1967},
}

@book{BramsTa96,
  author = {Steven J. Brams and Alan D. Taylor},
  title = {Fair Division: From Cake-Cutting to Dispute Resolution},
  year = 1996,
  publisher = {Cambridge University Press}
}

@inproceedings{Lipton2004,
    author = {Richard J. Lipton and Evangelos Markakis and Elchanan Mossel and Amin Saberi},
    title = "{On approximately fair allocations of indivisible goods}",
    booktitle = proc # {5th} # acmecold,
    pages = {125--131},
    year = {2004},
}

@book{LovaszPlummer2009,
  title={Matching Theory},
  author={L\'{a}szl\'{o} Lov\'{a}sz and Michael D. Plummer},
  year={2009},
  publisher={American Mathematical Society},
}

@article{Budish2011,
    author = {Eric Budish},
    title = "{The combinatorial assignment problem: Approximate competitive equilibrium from equal incomes}",
    journal = {Journal of Political Economy},
    volume = {119},
    number = {4},
    pages = {1061--1103},
    year = {2011},
}

@article{Lee2017,
    author = {Lee, Euiwoong},
    title = "{APX-hardness of maximizing Nash social welfare with indivisible items}",
    journal = {Information Processing Letters},
    volume = {122},
    pages = {17--20},
    year = {2017},
}

@article{Caragiannis2019,
    author = {Caragiannis, Ioannis and Kurokawa, David and Moulin, Herv\'{e} and Procaccia, Ariel D. and Shah, Nisarg and Wang, Junxing},
    title = "{The unreasonable fairness of maximum Nash welfare}",
    journal = {ACM Transactions on Economics and Computation},
    volume = {7},
    number = {3},
    pages = {1--12},
    year = {2019},
}

@article{AzizCaIg22,
	author = {Haris Aziz and Ioannis Caragiannis and Ayumi Igarashi and Toby Walsh},
	journal = {Autonomous Agents and Multi-Agent Systems},
	volume = {36},
	pages = {1--21},
	title = {Fair allocation of indivisible goods and chores},
	Year = {2022}
}

@inproceedings{AzizAdjustedWinner2015,
    author = {Haris Aziz and Simina Br\^{a}nzei and Aris Filos-Ratsikas and S{\o}ren Kristoffer and Stiil Frederiksen},
    title ={The adjusted winner procedure: Characterizations and equilibria} ,
    booktitle = proc # {24th} # ijcai,
    pages = {454--460},
    year = {2015},
}

@inproceedings{IgarashiYokoyama2023,
    author = {Ayumi Igarashi and Tomohiko Yokoyama},
    title = "{Kajibuntan: A house chore division app}",
    booktitle = {Proceedings of the 37th AAAI Conference on Artificial Intelligence (AAAI)},
    pages = {16449--16451},
    year = {2023},
}

@article{BOGOMOLNAIA2001,
title = {A new solution to the random assignment problem},
journal = {Journal of Economic Theory},
volume = {100},
number = {2},
pages = {295-328},
year = {2001},
author = {Anna Bogomolnaia and Herv\'{e} Moulin},
}

@article{Gibbard1973,
title={Manipulation of voting schemes: A general result},
author={Gibbard, Allan},
journal={Econometrica},
volume={41},
number={4},
pages={587--601},
year={1973},
publisher={JSTOR}
}

@article{Satterthwaite1975,
title={Strategy-proofness and {A}rrow's conditions: Existence and correspondence theorems for voting procedures and social welfare functions},
author={Satterthwaite, Mark Allen},
journal={Journal of Economic Theory},
volume={10},
number={2},
pages={187--217},
year={1975},
publisher={Elsevier}
}

@book{Moulin2004,
title={Fair Division and Collective Welfare},
author={Moulin, Herv\'{e}},
year={2004},
publisher={MIT press}
}

@article{ShapleyScarf1974,
title={On cores and indivisibility},
author={Shapley, Lloyd and Scarf, Herbert},
journal={Journal of Mathematical Economics},
volume={1},
number={1},
pages={23--37},
year={1974},
publisher={Elsevier}
}

@article{Roth2005,
title={Pairwise kidney exchange},
author={Roth, Alvin E. and S\"{o}nmez, Tayfun and \"{U}nver, M. Utku},
journal={Journal of Economic Theory},
volume={125},
number={2},
pages={151--188},
year={2005},
publisher={Elsevier}
}

@inproceedings{Brustle2020,
author = {Brustle, Johannes and Dippel, Jack and Narayan, Vishnu V. and Suzuki, Mashbat and Vetta, Adrian},
title = {One dollar each eliminates envy},
year = {2020},
booktitle = {Proceedings of the 21st ACM Conference on Economics and Computation (EC)},
pages = {23–39},
numpages = {17},
}

@article{AmanatidisSuvvey2023,
title   = {Fair division of indivisible goods: Recent progress and open questions},
author  = {Georgios Amanatidis and Haris Aziz and Georgios Birmpas and Aris Filos-Ratsikas and Bo Li and Herv\'{e} Moulin and Alexandros A. Voudouris and Xiaowei Wu},
journal = {Artificial Intelligence},
volume  = {322},
pages   = {103965},
year    = {2023},
}

@article{GoldmanProcaccia2015,
  author    = {Goldman, Jonathan and Procaccia, Ariel D.},
  title     = {Spliddit: Unleashing fair division algorithms},
  journal   = {ACM SIGecomm Exchanges},
  volume    = {13},
  number    = {2},
  pages     = {41--46},
  year      = {2015}
}

@inproceedings{HanSuksompong2024,
  author    = {Jiatong Han and Warut Suksompong},
  title     = {Fast \& fair: A collaborative platform for fair division applications},
  booktitle = {Proceedings of the 38th AAAI Conference on Artificial Intelligence (AAAI)},
  pages     = {23796--23798},
  year      = {2024}
}

@article{Aziz2022Algorithmic,
  author    = {Aziz, Haris and Li, Bo and Moulin, Herv{\'e} and Wu, Xiaowei},
  title     = {Algorithmic fair allocation of indivisible items: A survey and new questions},
  journal   = {ACM SIGecom Exchanges},
  volume    = {20},
  number    = {1},
  pages     = {24--40},
  year      = {2022},
}

@inproceedings{Walsh2020FairDivision,
  author    = {Walsh, Toby},
  title     = {Fair division: The computer scientist’s perspective},
  booktitle = {Proceedings of the 29th International Joint Conference on Artificial Intelligence (IJCAI)},
  pages     = {4966--4972},
  year      = {2020}
}

@article{zhou1990conjecture,
  author = {Zhou, Lin},
  title = {On a conjecture by {Gale} about one-sided matching problems},
  journal = {Journal of Economic Theory},
  volume = {52},
  number = {1},
  pages = {123--135},
  year = {1990}
}

@article{segal2019monotonicity,
 title={Monotonicity and competitive equilibrium in cake-cutting},
 author={Segal-Halevi, Erel and Sziklai, Bal{\'a}zs R.},
 journal={Economic Theory},
 volume={68},
 number={2},
 pages={363--401},
 year={2019},
 publisher={Springer Nature}
}

@article{budish2016course,
  title={Course match: A large-scale implementation of approximate competitive equilibrium from equal incomes for combinatorial allocation},
  author={Budish, Eric and Cachon, G{\'e}rard P. and Kessler, Judd B. and Othman, Abraham},
  journal={Operations Research},
  volume={65},
  number={2},
  pages={314--336},
  year={2016},
}

@article{sonmez1999strategy,
  title={Strategy-proofness and essentially single-valued cores},
  author={S\"{o}nmez, Tayfun},
  journal={Econometrica},
  volume={67},
  number={3},
  pages={677--689},
  year={1999}
}

@article{nguyen2014computational,
  author    = {Nhan-Tam Nguyen and Trung Thanh Nguyen and Magnus Roos and J{\"o}rg Rothe},
  title     = {Computational complexity and approximability of social welfare optimization in multiagent resource allocation},
  journal   = {Autonomous Agents and Multi-Agent Systems},
  volume    = {28},
  number    = {2},
  pages     = {256--289},
  year      = {2014},
}

\appendix

\section{Proof of Theorem~\ref{thm:degree_envy_round-robin}}\label{app:Proof_of_degree_envy_round-robin}

\begin{proof}[Proof of Theorem~\ref{thm:degree_envy_round-robin}]
We first observe that for any two positive integers $\ell$ and $n$, the equality $\Big\lfloor\frac{\big\lfloor\frac{n}{2^{\ell}}\big\rfloor}{2}\Big\rfloor = \big\lfloor\frac{n}{2^{\ell+1}}\big\rfloor$ holds.
    We prove the theorem by constructing an instance. Let $N=\{a_1,a_2,\ldots, a_n\}$ denote the set of agents.

    Consider two agent orderings $\pi, \pi' \in \Pi$ where $\pi(a_i) = i$ and $\pi'(a_i) = n+1-i$ for each $i\in\{1,2,\ldots,n\}$.
    For each agent $a$ and round $r$, let $g_a^r$ denote the good selected by agent $a$ in round $r$ under $\pi$.
    Let $C$ be a positive constant satisfying $C > \lceil m/n \rceil - \lfloor \log_2 n \rfloor + 1 \geq 1$.
    We construct a profile $(u_a(g))_{a\in N, g\in M}$ as follows:    
    \begin{itemize}
        \item Agent $a_1$ values their first $\lfloor \log_2 n \rfloor$ choices with utility $C$: $u_{a_1}(g_{a_1}^r) = C$ for $r=1,2,\ldots,\lfloor \log_2 n \rfloor$.
        \item Agent $a_1$ values their remaining choices with utility 1: $u_{a_1}(g_{a_1}^r) = 1$ for $r=\lfloor \log_2 n \rfloor+1,\lfloor \log_2 n \rfloor+2,\ldots,\lceil m/n \rceil$.
        \item All other agents value their choices with utility 1: $u_{a_i}(g_{a_i}^r) = 1$ for $r=1,2,\ldots,\lceil m/n \rceil$ and $i=2,3,\ldots,n$.
        \item For each round $r=1,2,\ldots,\lfloor \log_2 n \rfloor$, we set $u_{a_i}(g_{a_j}^{r}) = 1$ for indices $i$ satisfying $\Big\lfloor \frac{n}{2^{r}} \Big\rfloor+1 \leq i \leq \Big\lfloor \frac{n}{2^{r-1}} \Big\rfloor$, where $j=\big\lfloor \frac{n}{2^{r-1}} \big\rfloor-i+1$.
        \item All other utilities are set to 0.
    \end{itemize}

To aid understanding, we present an example of this profile for $n=5$ and abundant goods in Table~\ref{table:rr_n_5}.

\begin{table*}[h]
    \centering
        \begin{tabular}{c|c|c|c|c|c|c|c|c|c|c|c|c|c|c|c}
          &$g_{a_1}^1$ & $g_{a_2}^1$ & $g_{a_3}^1$ & $g_{a_4}^1$ & $g_{a_5}^1$ & $g_{a_1}^2$ & $g_{a_2}^2$ & $g_{a_3}^2$ & $g_{a_4}^2$ & $g_{5}^2$ & $g_{a_1}^3$ & $g_{a_2}^3$  & $g_{a_3}^3$ & $g_{a_4}^3$ & $g_{a_5}^3$  \\
          \hline
          Agent $a_1$ & $C$ &  & &  &  & $C$ & &  &  &  & $1$ & &  &  &  \\
          \hline
           Agent $a_2$ &  & $1$ &  &  &  & $1$ & $1$ &  &  &  &  & $1$ &  &  &  \\
          \hline
          Agent $a_3$ &  &  & $1$ &  &  &  & & $1$ &  &  &  & & $1$ &  &    \\
          \hline
          Agent $a_4$ &  & $1$ &  & $1$ &  &  & &  & $1$ &  &  & &  & $1$ &   \\
          \hline
          Agent $a_5$ & $1$ &  &  &  & $1$ &  &  &  &  & $1$ & & &  &  & $1$  \\
        \end{tabular}
        \caption{An illustration of the utility profile for $n=5$. Each blank entry represents a utility of $0$. Under ordering $\pi$, agent $a_1$ receives $\{g_{a_1}^1, g_{a_1}^2, g_{a_1}^3\}$, while under $\pi'$, they receive only $g_{a_1}^3$.}
        \label{table:rr_n_5}
\end{table*}

Under agent ordering $\pi$, agent $a_1$ obtains 
$
    \mathcal{M}(u_{\pi})_{\pi(a_1)} = \left\{g_{a_1}^1,g_{a_1}^2,\ldots,g_{a_1}^{\lceil m/n \rceil}\right\}$.
We will demonstrate that for agent $a_1$, removing any $\lfloor \log_2 n \rfloor - 1$ highest-valued goods from $\mathcal{M}(u_{\pi})_{\pi(a_1)}$ still yields a bundle with strictly greater utility than $\mathcal{M}(u_{\pi'})_{\pi'(a_1)}$.

We now consider the bundle $\mathcal{M}(u_{\pi'})_{\pi'(a_1)}$.
In the mechanism under $\pi'$, agent $a_i$ selects good $g_{a_{n-i+1}}^1$ for each $i=n,n-1,\ldots,\lfloor \frac{n}{2} \rfloor+1$ in the first round.
Next, since good $g_{a_{\lfloor \frac{n}{2} \rfloor}}^1$ have already been selected, agent $a_{\lfloor \frac{n}{2} \rfloor}$ choose good $g_{a_1}^2$.
Inductively, by the forth condition, 
for each $r=1,2,\ldots,\lfloor \log_2 n \rfloor$,
and for each $i=\Big\lfloor \frac{n}{2^{r-1}} \Big\rfloor, \Big\lfloor \frac{n}{2^{r-1}} \Big\rfloor -1,\ldots,\Big\lfloor \frac{n}{2^{r}} \Big\rfloor+1$,
agent $a_i$ selects good $g_j^r$
where $j=\big\lfloor \frac{n}{2^{r-1}} \big\rfloor-i+1$.
When $r=\lfloor \log_2 n \rfloor$,
good $g_{a_1}^{\lfloor \log_2 n \rfloor}$ 
is selected by agent $a_{j_1}$ where
$
j_1 = \left\lfloor \frac{n}{2^{\lfloor \log_2 n \rfloor-1}} \right\rfloor 
- \left\lfloor \frac{n}{2^{\lfloor \log_2 n \rfloor}} \right\rfloor +1 > 1
$ 
since we have $\left\lfloor \frac{n}{2^{\lfloor \log_2 n \rfloor}} \right\rfloor = 1$.
Therefore, agent $a_1$ chooses good $g_{a_1}^{\lfloor \log_2 n \rfloor + 1}$ in the first round. 
From the above discussion, we obtain $g_{a_1}^{\lfloor \log_2 n \rfloor} \notin \mathcal{M}(u_{\pi'})_{\pi'(a_1)}$ and
\[
    \left\{g_{a_1}^{\lfloor \log_2 n \rfloor + 1}, g_{a_1}^{\lfloor \log_2 n \rfloor + 2}, \ldots, g_{a_1}^{\lceil m/n \rceil}\right\} \supseteq \mathcal{M}(u_{\pi'})_{\pi'(a_1)}.
\]
From the first condition of the profile, we have
\begin{align*}
    u_{a_1}\left(g_{a_1}^{\lfloor \log_2 n \rfloor}\right)
    &= C \\
    &> \lceil m/n \rceil - \lfloor \log_2 n \rfloor + 1 \\
    &> \lceil m/n \rceil - \lfloor \log_2 n \rfloor \\
    &= \sum_{r=\lfloor \log_2 n \rfloor+1}^{\lceil m/n \rceil} u_{a_1}(g_{a_1}^r) \\
    &\geq u_{a_1}(\mathcal{M}(u_{\pi'})_{\pi'(a_1)}).
\end{align*}
Since 
$
g_{a_1}^{\lfloor \log_2 n \rfloor} \in \mathcal{M}(u_{\pi})_{\pi(a_1)}\setminus 
\bigl\{g_{a_1}^1,g_{a_1}^2,\ldots,g_{a_1}^{\lfloor \log_2 n \rfloor-1} \bigr\},
$
we obtain
\begin{align*}
    u_1\left(\mathcal{M}(u_{\pi})_{\pi(a_1)}\setminus \bigl\{g_{a_1}^1,g_{a_1}^2,\ldots,g_{a_1}^{\lfloor \log_2 n \rfloor-1} \bigr\} \right) 
    \ge u_{a_1}\left(g_{a_1}^{\lfloor \log_2 n \rfloor}\right),
\end{align*}
and then
\begin{align*}
    \min_{S\subseteq \mathcal{M}(u_{\pi})_{\pi(a_1)} \text{ with } |S|=\lfloor \log_2 n \rfloor-1} u_{a_1}(\mathcal{M}(u_{\pi})_{\pi(a_1)}\setminus S) 
    &\quad= u_{a_1}\left(\mathcal{M}(u_{\pi})_{\pi(a_1)}\setminus \bigl\{g_1^1,g_1^2,\ldots,g_1^{\lfloor \log_2 n \rfloor-1} \bigr\} \right) \\
    &\quad\ge u_{a_1}\left(g_{1}^{\lfloor \log_2 n \rfloor}\right) \\
    &\quad> u_{a_1}(\mathcal{M}(u_{\pi'})_{\pi'(a_1)}).
\end{align*}
This shows that even after removing any $\lfloor \log_2 n \rfloor-1$ goods from agent $a_1$'s bundle under $\pi$, agent $a_1$ still prefers their bundle under $\pi$ to their bundle under $\pi'$. This completes the proof.
\end{proof}

\section{Proof of Theorem~\ref{thm:round-robin-n-2-3} for Two Agents}\label{app:Proof_two_agents}

To prove the theorem for two agents, we first present a lemma (Lemma~\ref{lemma:round-robin-for-n=2}) that characterizes the relationship between sets of allocated goods under two different agent orderings.

Let $N=\{a_1,a_2\}$ be the set of two agents.
For proving PEF1, it suffices to consider the agent orderings $\pi$ and $\pi'$ such that $\pi(a_i) = i$ and $\pi'(a_i) = 3-i$ for each $i\in\{1,2\}$.
For each agent $a$ and round $k$, let $g_a^k$ and $h_a^k$ denote the goods selected by agent $a$ in round $k$ under agent orderings $\pi$ and $\pi'$, respectively.
Let $A^r = \bigcup_{a\in N}\{g_a^{r'} \mid r'=1,2,\ldots,r\}$ and $B^r = \bigcup_{a\in N}\{h_a^{r'} \mid r'=1,2,\ldots,r\}$.
\begin{lemma}\label{lemma:round-robin-for-n=2}
    For each $r=1,2,\ldots,\lceil m/n \rceil - 1$, 
    $B^{r}=A^{r}$ or 
    $
        B^r = A^{r} \setminus \{g_{a_2}^r\} \cup \{g_{a_1}^{r+1}\}.
    $
\end{lemma}

\begin{proof}
    We prove the statement by induction on $r$. First, consider the base case $r=1$.
    When agent $a_2$ selects the same good under $\pi'$ that agent $a_1$ selects under $\pi$ (i.e., $h_{a_2}^{1} = g_{a_1}^1$), then agent $a_1$ must select either $g_{a_2}^1$ or $g_{a_1}^2$ under $\pi'$.
    When $h_{a_2}^{1} \neq g_{a_1}^1$, we have $h_{a_2}^1=g_{a_2}^1$ and $h_{a_1}^1=g_{a_1}^1$. 
    Therefore, either $B^1 = \{h_{a_1}^1,h_{a_2}^1\} = \{g_{a_1}^1,g_{a_2}^1\} = A^1$ or $B^1 = \{h_{a_1}^1,h_{a_2}^1\} = \{g_{a_1}^1,g_{a_1}^2\} = A^{1} \setminus \{g_{a_2}^1\} \cup \{g_{a_1}^2\}$, establishing the base case.
    
    For the inductive step, suppose the statement holds for $r-1$.
    We consider two cases.
    Case (i): Suppose $B^{r-1} = A^{r-1}$. In round $r$ under $\pi'$, agent $a_2$ must select either $g_{a_1}^r$ or $g_{a_2}^r$ since these are the most preferred available goods.
    If $h_{a_2}^r=g_{a_1}^r$, then agent $a_1$ must select either $g_{a_2}^r$ or $g_{a_1}^{r+1}$. This yields either $B^{r} = A^{r}$ or $B^{r} = A^{r} \setminus \{g_{a_2}^r\} \cup \{g_{a_1}^{r+1}\}$.
    If $h_{a_2}^r = g_{a_2}^r$, then necessarily $h_{a_1}^r = g_{a_1}^r$, resulting in $B^{r} = A^{r}$.
    Case (ii): Suppose $B^{r-1} = A^{r-1} \setminus \{g_{a_2}^{r-1}\} \cup \{g_{a_1}^{r}\}$.
    In round $r$ under $\pi'$, agent $a_2$ selects $g_{a_2}^{r-1}$, which is the most preferred available good, implying $h_{a_2}^r = g_{a_2}^{r-1}$. Given this selection, agent $a_1$ selects either $g_{a_2}^r$ or $g_{a_1}^{r+1}$. When $h_{a_1}^r = g_{a_2}^r$, we obtain $B^{r} = A^{r}$. When $h_{a_1}^r = g_{a_1}^{r+1}$, we obtain $B^{r} = A^{r} \setminus \{g_{a_2}^r\} \cup \{g_{a_1}^{r+1}\}$.
\end{proof}

We now prove Theorem~\ref{thm:round-robin-n-2-3} for two agents.
\begin{proof}[Proof of Theorem~\ref{thm:round-robin-n-2-3} for two agents]
Let $\mathcal{M}$ denote the round-robin mechanism.
To prove PEF1 for two agents, it suffices to show that for every agent $a\in N$ and round $r=1,2,\ldots,\lceil m/n \rceil - 1$,
$u_a\bigl(h^{r}_a\bigr)\geq u_a\bigl(g^{r+1}_a\bigr)$ 
and 
$u_{a}\bigl(g^{r}_{a}\bigr)\geq u_{a}\bigl(h^{r+1}_{a}\bigr)$.
Indeed, if these inequalities hold, we can derive that for every $a\in N$,
$u_a(\mathcal{M}(u_{\pi'})) = \sum_{r=1}^{\lceil m/n \rceil} u_a\bigl(h^{r}_a\bigr) \ge u_a\bigl(h^{1}_a\bigr) + \sum_{r=1}^{\lceil m/n \rceil - 1} u_a\bigl(g^{r+1}_a\bigr) \ge u_a\bigl(\mathcal{M}(u_{\pi}) \setminus {g_a^1}\bigr)$, and
$u_a(\mathcal{M}(u_{\pi})) = \sum_{r=1}^{\lceil m/n \rceil} u_a\bigl(g^{r}_a\bigr) \ge u_a\bigl(g^{1}_a\bigr) + \sum_{r=1}^{\lceil m/n \rceil - 1} u_a\bigl(h^{r+1}_a\bigr) \ge u_a\bigl(\mathcal{M}(u_{\pi'}) \setminus {h_a^1}\bigr)$.

We now show these inequalities hold for every round. 
By Lemma~\ref{lemma:round-robin-for-n=2}, we consider two cases in each round $r$. 

First, consider the case where $B^{r} = A^{r}$. Here, both inequalities $u_a\bigl(h^{r}_a\bigr)\geq u_a\bigl(g^{r+1}_a\bigr)$ and $u_a\bigl(g^{r}_a\bigr)\geq u_a\bigl(h^{r+1}_a\bigr)$ hold for all $a\in N$. This follows from the fact that good $h^{r+1}_a$ is available when agent $a$ selects good $g^{r}_a$ under $\pi$, and good $g^{r+1}_a$ is available when agent $a$ selects good $h^{r}_a$ under $\pi'$.

Next, consider the case where $B^r = A^{r} \setminus \{g_{a_2}^{r}\} \cup \{g_{a_1}^{r+1}\}$. 
First, since good $g_{a_2}^{r}$ is available when agent $a_2$ makes a selection in round $r$ under $\pi'$, and agent $a_2$ chooses the good $g_{a_2}^{r}$, we have $h^{r}_{a_2} = g^{r}_{a_2}$. Since  $u_{a_2}\bigl(g^{r}_{a_2}\bigr) \ge u_{a_2}\bigl(g^{r+1}_{a_2}\bigr)$, we obtain $u_{a_2}\bigl(h^{r}_{a_2}\bigr)\ge u_{a_2}\bigl(g^{r+1}_{a_2}\bigr)$.
For agent $a_1$, by the construction, we have $h^{r}_{a_1} = g^{r+1}_{a_1}$. This implies $u_{a_1}\bigl(h^{r}_{a_1}\bigr) \ge u_{a_1}\bigl(g^{r+1}_{a_1}\bigr)$.
In round $r+1$ under $\pi'$, agent $a_2$ cannot select any good that gives higher utility than $g^{r}_{a_2}$, implying $u_{a_2}\bigl(g^{r}_{a_2}\bigr)\ge u_{a_2}\bigl(h^{r+1}_{a_2}\bigr)$. Similarly, agent $a_1$ cannot select any good that gives higher utility than $g^{r}_{a_1}$, implying $u_{a_1}\bigl(g^{r}_{a_1}\bigr)\ge u_{a_1}\bigl(h^{r+1}_{a_1}\bigr)$.
\end{proof}

\section{Proof of Theorem~\ref{thm:round-robin-n-2-3} for Three Agents}\label{app:Proof_three_agents}

Let $N=\{a_1,a_2,a_3\}$ be the set of three agents.
We fix $\pi_1$ as $\pi_1(a_i) = i$ for each $i=1,2,3$, and assume that $\pi_2 \neq \pi_1$.
For each agent $a$ and round $k$, let $g_a^k$ and $h_a^k$ denote the goods selected by agent $a$ in round $k$ under agent orderings $\pi_1$ and $\pi_2$, respectively.
Let $A^r = \bigcup_{a\in N}\{g_a^{r'} \mid r'=1,2,\ldots,r\}$ and $B^r = \bigcup_{a\in N}\{h_a^{r'} \mid r'=1,2,\ldots,r\}$.

Similar to the case of two agents, we first characterize how the sets of allocated goods differ between two agent orderings. Specifically, we show that $B^r$ can be represented as one of six cases (a), (b), (c), (d), (e) and (f) in Figure~\ref{table:round_robin_n_3_r_1}.

\begin{lemma}\label{lemma:round-robin-for-n=3}
    Suppose that $n=3$. Let $\pi_2(N) = (\pi_2(a_1),\pi_2(a_2),\pi_2(a_3))$. We have the followings:
    \begin{enumerate}
      \item When $\pi_2(N) = (1,3,2),(3,1,2)$ or $(3,2,1)$, for all $r=1,2,\ldots,\lceil m/n \rceil - 1$, we have $B^r=A^r$ (case (a)), $B^r = A^{r} \setminus \{g_{a_3}^r\} \cup \{g_{a_1}^{r+1}\}$ (case (b)), or $B^r = A^{r} \setminus \{g_{a_3}^r\} \cup \{g_{a_2}^{r+1}\}$ (case (c)).
      \item When $\pi_2(N) = (2,3,1)$, for all $r=1,2,\ldots,\lceil m/n \rceil - 1$, we have $B^r=A^r$ (case (a)), $B^r = A^{r} \setminus \{g_{a_3}^r\} \cup \{g_{a_1}^{r+1}\}$ (case (b)), or $B^r = A^{r} \setminus \{g_{a_2}^r\} \cup \{g_{a_1}^{r+1}\}$ (case (d)).
      \item When $\pi_2(N) = (2,1,3)$, for all $r=1,2,\ldots,\lceil m/n \rceil - 1$, we have $B^r=A^r$ (case (a)), 
        $B^r = A^{r} \setminus \{g_{a_3}^r\} \cup \{g_{a_1}^{r+1}\}$ (case (b)), 
        $B^r = A^{r} \setminus \{g_{a_2}^r\} \cup \{g_{a_1}^{r+1}\}$ (case (d)), 
        $B^r = A^{r} \setminus \{g_{a_2}^r\} \cup \{g_{a_2}^{r+1}\}$ (case (e)), or 
        $B^r = A^{r} \setminus \{g_{a_2}^r\} \cup \{g_{a_3}^{r+1}\}$ 
        (case (f)).
    \end{enumerate}
\end{lemma}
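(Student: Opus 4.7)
I would prove Lemma~\ref{lemma:round-robin-for-n=3} by induction on $r$, handling each of the three claims (corresponding to different families of $\pi_2$) via an analogous case analysis, in the spirit of the $n=2$ proof of Lemma~\ref{lemma:round-robin-for-n=2}. Fix $\pi_1$ with $\pi_1(a_i)=i$ and a non-identity $\pi_2$. For the base case $r=1$, I would simulate round $1$ under $\pi_2$ directly: the first picker $\pi_2^{-1}(1)$ takes their overall top good, and each subsequent picker takes their top among what remains. The reason $B^1$ must land in one of the listed forms is the following monotonicity: if an agent $a$ picks at position $k$ under $\pi_1$ and at an earlier position $k'<k$ under $\pi_2$, then the set from which $a$ chooses under $\pi_2$ is a superset of the analogous set under $\pi_1$; since $g^1_a$ was $a$'s top of the latter, the new top is either $g^1_a$ itself or one of a small, explicit collection of goods that were previously unavailable to $a$ (the picks made by earlier-position agents under $\pi_1$). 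Enumerating these for each $\pi_2$ produces exactly the listed cases.

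For the inductive step, I would assume $B^{r-1}$ has one of the listed forms and analyze round $r$ under $\pi_2$. In each subcase, the set of unassigned goods entering round $r$ under $\pi_2$ differs from its $\pi_1$-counterpart by at most a single swap of the type $g^{r-1}_{a_j}\leftrightarrow g^{r}_{a_i}$ prescribed by the form at $r-1$. Stepping through the agents in the $\pi_2$-order, I would invoke the same monotonicity principle at each step: the picking agent's remaining-best good is either identical to their pick under $\pi_1$, or the specific displaced good of the previous round, or one of the picks that $\pi_1$ would make in round $r$ or $r+1$. A careful bookkeeping shows this transition again lands in one of the admissible forms for the given $\pi_2$, closing the induction. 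An illustrative transition diagram analogous to Figure~\ref{fig:round_robin_n_2}, drawn separately for each $\pi_2$, would organize the possibilities.

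The heart of the work is the third claim with $\pi_2(N)=(2,1,3)$, which I expect to be the main obstacle: here the hypothesis admits five forms (a), (b), (d), (e), (f), and the combinatorics of transitions is the most intricate. For this case I would exploit that $a_2$ picks first under $\pi_2$ while second under $\pi_1$, so the set from which $a_2$ selects at round $r$ is enlarged over its $\pi_1$-counterpart by at most a single good (typically $g^r_{a_1}$ or a displaced $g^{r-1}_{a_2}$). This tightly restricts $a_2$'s possible picks to a short list, and a symmetric observation constrains $a_1$'s pick. A similar but smaller analysis handles the remaining $\pi_2$'s in the first and second claims, where only three of the six forms are reachable. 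Once the lemma is established in this form, the PEF1 conclusion of Theorem~\ref{thm:round-robin-n-2-3} for $n=3$ follows by verifying the inequalities~\eqref{equation:y-geq-x} on a per-form basis exactly as in the $n=2$ argument, because each form confines the discrepancy between $A^r$ and $B^r$ to a single good per round.
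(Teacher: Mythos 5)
Your plan follows essentially the same route as the paper's proof: induction on $r$, with a base case that simulates round $1$ under each $\pi_2$ and an inductive step that, assuming $B^{r-1}$ has one of the admissible forms, steps through the agents in the $\pi_2$-order and checks that the resulting $B^r$ again lands in one of those forms. The only difference is that you leave the exhaustive enumeration (which is the bulk of the paper's argument, particularly the five-form bookkeeping for $\pi_2(N)=(2,1,3)$) asserted rather than carried out, but your guiding observations --- that the unassigned goods entering round $r$ under $\pi_2$ differ from the $\pi_1$-counterpart by exactly the single prescribed swap, which confines each agent's pick to a short explicit list --- are precisely the ones the paper's case analysis relies on.
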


\begin{figure}[htbp]
        \definecolor{gray2}{gray}{0.9}
        \begin{tabular}{ccc}
        \begin{minipage}[b]{0.33\linewidth}
           \subcaption{}
           \label{case:a}
           \centering
           \roundrobincase{
               \fill[gray2] (m-1-2.north west) -- (m-1-3.north east) -- 
                             (m-1-3.south east) -- (m-1-2.south west) -- cycle;
                \fill[gray2] (m-2-2.north west) -- (m-2-3.north east) --
                             (m-2-3.south east) -- (m-2-2.south west) -- cycle;
                \fill[gray2] (m-3-2.north west) -- (m-3-3.north east) --
                             (m-3-3.south east) -- (m-3-2.south west) -- cycle;
           }
        \end{minipage}
    
        \begin{minipage}[b]{0.33\linewidth}
           \subcaption{}
           \centering
           \roundrobincase{
               \fill[gray2] (m-1-2.north west) -- (m-1-4.north east) -- 
                            (m-1-4.south east) -- (m-1-2.south west) -- cycle;
               \fill[gray2] (m-2-2.north west) -- (m-2-3.north east) --
                            (m-2-3.south east) -- (m-2-2.south west) -- cycle;
               \fill[gray2] (m-3-2.north west) -- (m-3-2.north east) --
                            (m-3-2.south east) -- (m-3-2.south west) -- cycle;
           }
        \end{minipage}

        \begin{minipage}[b]{0.33\linewidth}
           \subcaption{}
           \centering
           \roundrobincase{
               \fill[gray2] (m-1-2.north west) -- (m-1-3.north east) -- 
                            (m-1-3.south east) -- (m-1-2.south west) -- cycle;
               \fill[gray2] (m-2-2.north west) -- (m-2-4.north east) --
                            (m-2-4.south east) -- (m-2-2.south west) -- cycle;
               \fill[gray2] (m-3-2.north west) -- (m-3-2.north east) --
                            (m-3-2.south east) -- (m-3-2.south west) -- cycle;
           }
        \end{minipage}

        \\

        \begin{minipage}[b]{0.33\linewidth}
           \subcaption{}
           \centering
           \roundrobincase{
               \fill[gray2] (m-1-2.north west) -- (m-1-4.north east) -- 
                            (m-1-4.south east) -- (m-1-2.south west) -- cycle;
               \fill[gray2] (m-2-2.north west) -- (m-2-2.north east) --
                            (m-2-2.south east) -- (m-2-2.south west) -- cycle;
               \fill[gray2] (m-3-2.north west) -- (m-3-3.north east) --
                            (m-3-3.south east) -- (m-3-2.south west) -- cycle;
           }
        \end{minipage}

        \begin{minipage}[b]{0.33\linewidth}
           \subcaption{}
           \centering
           \roundrobincase{
               \fill[gray2] (m-1-2.north west) -- (m-1-3.north east) -- 
                            (m-1-3.south east) -- (m-1-2.south west) -- cycle;
               \fill[gray2] (m-2-2.north west) -- (m-2-2.north east) --
                            (m-2-2.south east) -- (m-2-2.south west) -- cycle;
               \fill[gray2] (m-3-2.north west) -- (m-3-3.north east) --
                            (m-3-3.south east) -- (m-3-2.south west) -- cycle;
               \fill[gray2] (m-2-4.north west) -- (m-2-4.north east) --
                            (m-2-4.south east) -- (m-2-4.south west) -- cycle;
           }
        \end{minipage}

        \begin{minipage}[b]{0.33\linewidth}
           \subcaption{}
           \centering
           \roundrobincase{
               \fill[gray2] (m-1-2.north west) -- (m-1-3.north east) -- 
                            (m-1-3.south east) -- (m-1-2.south west) -- cycle;
               \fill[gray2] (m-2-2.north west) -- (m-2-2.north east) --
                            (m-2-2.south east) -- (m-2-2.south west) -- cycle;
               \fill[gray2] (m-3-2.north west) -- (m-3-4.north east) --
                            (m-3-4.south east) -- (m-3-2.south west) -- cycle;
           }
        \end{minipage}
    
    \end{tabular}
    \caption{The table which illustrates six cases. The gray shaded region represents the set $B^{r}$.}
    \label{table:round_robin_n_3_r_1}
    
\end{figure}

\begin{proof}[Proof of Lemma~\ref{lemma:round-robin-for-n=3}]

Similarly to the proof of Lemma~\ref{lemma:round-robin-for-n=2}, we prove the theorem by induction on $r$. 
We check the statement when $r=1$ through a case analysis. 
\begin{enumerate}
        \item The case of $\pi_2(N)=(1,3,2)$. In the first round under $\pi_2$, agent $a_1$ selects the good $g_{a_1}^1$. Then, we have $h_{a_1}^1 = g_{a_1}^1$. Consequently, agent $a_3$ chooses $g_{a_2}^1$ or $g_{a_3}^1$. 
        \begin{enumerate}
        \item When $h_{a_3}^1 = g_{a_2}^1$, agent $a_2$ chooses $g_{a_3}^1$ or $g_{a_1}^2$ or $g_{a_2}^2$. Therefore, the possible patterns for selecting $B^1$ are the cases (a), (b) or (c) illustrated in Figure~\ref{table:round_robin_n_3_r_1} for $r=2$.
        \item When $h_{a_3}^1 = g_{a_3}^1$, agent $a_2$ selects good $g_{a_2}^1$, and $B^1=A^1$ (case (a)).
        \end{enumerate}

        \item The case of $\pi_2(N)=(3,1,2)$. In the first round under $\pi_2$, agent $a_3$ first selects $g_{a_1}^1$, $g_{a_2}^1$ or $g_{a_3}^1$. 
        \begin{enumerate}
        \item When $h_{a_3}^1 = g_{a_1}^1$, agent $a_1$ picks a good from $\{g_{a_2}^1,g_{a_3}^1,g_{a_1}^2\}$. 
        If $h_{a_1}^1 = g_{a_2}^1$, then $h_{a_2}^1 = g_{a_3}^1$ (case (a)), $g_{a_1}^2$ (case (b)) or $g_{a_2}^2$ (case (c)).
        If $h_{a_1}^1 = g_{a_3}^1$, then $h_{a_2}^1 = g_{a_2}^1$ (case (a)).
        If $h_{a_1}^1 = g_{a_1}^2$, then $h_{a_2}^1 = g_{a_2}^1$ (case (b)).
        \item When $h_{a_3}^1 = g_{a_2}^1$, agent $a_1$ obtains $g_{a_1}^1$, and agent $a_2$ picks $g_{a_3}^1$ (case (a)), $g_{a_1}^2$ (case (b)) or $g_{a_2}^2$ (case (c)).
        \item When $h_{a_3}^1 = g_{a_3}^1$, agent $a_1$ obtains $g_{a_1}^1$, and agent $a_2$ obtains $g_{a_2}^1$ (case (a)).
        \end{enumerate}

        \item The case of $\pi_2(N)=(3,2,1)$. In the first round under $\pi_2$, agent $a_3$ first selects $g_{a_1}^1$, $g_{a_2}^1$ or $g_{a_3}^1$. 
        \begin{enumerate}
        \item When $h_{a_3}^1 = g_{a_1}^1$, agent $a_2$ obtains $a_2^1$, and agent $a_1$ picks $g_{a_3}^1$ (case (a)) or $g_{a_1}^2$ (case (b)).
        \item When $h_{a_3}^1 = g_{a_2}^1$, agent $a_2$ picks a good from $\{g_{a_1}^1,g_{a_3}^1,g_{a_1}^2,a_2^2\}$. 
        If $h_{a_2}^1 = g_{a_1}^1$, then $h_{a_1}^1 = g_{a_3}^1$ (case (a)) or $g_{a_1}^2$ (case (b)).
        If $h_{a_2}^1 = g_{a_3}^1$, then $h_{a_1}^1 = g_{a_1}^1$ (case (a)).
        If $h_{a_2}^1 = g_{a_1}^2$, then $h_{a_1}^1 = g_{a_1}^1$ (case (b)).
        If $h_{a_2}^1 = a_2^2$, then $h_{a_1}^1 = g_{a_1}^1$ (case (c)).
        \item When $h_{a_3}^1 = g_{a_3}^1$, agent $a_2$ obtains $a_2^1$, and agent $a_1$ obtains $g_{a_1}^1$ (case (a)).
        \end{enumerate}

        \item The case of $\pi_2(N)=(2,3,1)$. In the first round under $\pi_2$, agent $a_2$ selects $g_{a_1}^1$ or $g_{a_2}^1$. 
        \begin{enumerate}
        \item When $h_{a_2}^1 = g_{a_1}^1$, agent $a_3$ next chooses $g_{a_2}^1$ or $g_{a_3}^1$. If $h_{a_3}^1 = g_{a_2}^1$, then agent $a_1$ picks $g_{a_3}^1$ (case (a)) or $g_{a_1}^2$ (case (b)). 
        If $h_{a_3}^1 = g_{a_3}^1$, then agent $a_1$ picks $a_2^1$ (case (a)) or $g_{a_1}^2$ (case (d)).
        \item When $h_{a_2}^1 = g_{a_2}^1$, agent $a_3$ next chooses $g_{a_1}^1$ or $g_{a_3}^1$. If $h_{a_3}^1 = g_{a_1}^1$, then agent $a_1$ selects $g_{a_3}^1$ (case (a)) or $g_{a_1}^2$ (case (b)).
        If $h_{a_3}^1 = g_{a_3}^1$, then $h_{a_1}^1=g_{a_1}^1$ (case (a)).
        \end{enumerate}
        
        \item The case of $\pi_2(N)=(2,1,3)$. In the first round under $\pi_2$, agent $a_2$ selects $g_{a_1}^1$ or $g_{a_2}^1$. 
        \begin{enumerate}
        \item When $h_{a_2}^1 = g_{a_1}^1$, agent $a_1$ next chooses $g_{a_2}^1$, $g_{a_3}^1$ or $g_{a_1}^2$.
        If agent $a_1$ picks $g_{a_2}^1$, then agent $a_3$ selects $g_{a_3}^1$ (cases (a)).
        If agent $a_1$ picks $g_{a_1}^2$, then agent $a_3$ selects $a_2^1$ (cases (b)) or $g_{a_3}^1$ (cases (d)).
        If agent $a_1$ selects $g_{a_3}^1$, then agent $a_3$ picks $g_{a_2}^1$, $g_{a_1}^2$, $a_{2}^2$ or $a_{3}^2$. Each case corresponds to each case (a), (d), (e) or (f).
        \item When $h_{a_2}^1 = g_{a_2}^1$, agent $a_1$ picks $g_{a_1}^1$, and agent $a_3$ picks $g_{a_3}^1$  (case (a)).
        \end{enumerate}

        \end{enumerate}

    From these discussion, the statement holds for $r=1$. 

    Next, suppose the statement holds for $r-1$.
    When $B^{r-1}=A^{r-1}$ (case (a)), the statement holds for $r$ by the same argument as for the case of $r=1$.
    We consider the case where $B^{r-1} \neq A^{r-1}$ and show the statement for $r$ by a case analysis. 

    \begin{enumerate}
            \item The case of $\pi_2(N)=(1,3,2)$.
            \begin{enumerate}
                \item When $B^{r-1} = A^{r-1} \setminus \{g_{a_3}^{r-1}\} \cup \{g_{a_1}^{r}\}$ (case (b)), agent $a_1$ first selects $g_{a_3}^{r-1}$, $g_{a_2}^{r}$, $g_{a_3}^{r}$ or $g_{a_1}^{r+1}$. 
                If $h_{a_1}^{r}=g_{a_3}^{r-1}$, then $(h_{a_3}^{r}, h_{a_2}^{r})=(g_{a_2}^{r},g_{a_3}^{r})$, $(g_{a_2}^{r}, g_{a_1}^{r+1})$, $(g_{a_2}^{r}, g_{a_2}^{r+1})$, or $(g_{a_3}^{r}, g_{a_2}^{r})$. 
                If $h_{a_1}^{r}=g_{a_2}^{r}$, then $(h_{a_3}^{r}, h_{a_2}^{r})=(g_{a_3}^{r-1},g_{a_3}^{r})$, $(g_{a_3}^{r-1},g_{a_1}^{r+1})$, $(g_{a_3}^{r-1},g_{a_2}^{r+1})$, $(g_{a_3}^{r},a_3^{r-1})$, $(g_{a_3}^{r},g_{a_1}^{r+1})$ or $(g_{a_3}^{r},g_{a_2}^{r+1})$. 
                If $h_{a_1}^{r}=g_{a_3}^{r}$, then $(h_{a_3}^{r}, h_{a_2}^{r})=(g_{a_3}^{r-1},g_{a_2}^{r})$.
                If $h_{a_1}^{r}=g_{a_1}^{r+1}$, then $(h_{a_3}^{r}, h_{a_2}^{r})=(g_{a_3}^{r-1},g_{a_2}^{r})$. Only cases (a), (b), (c) or (a) occur.
                \item When $B^{r-1} = A^{r-1} \setminus \{g_{a_3}^{r-1}\} \cup \{g_{a_2}^{r}\}$ (case (c)), agent $a_1$ first selects $g_{a_3}^{r-1}$ or $g_{a_1}^{r}$.  If $h_{a_1}^{r}=g_{a_3}^{r-1}$, then $(h_{a_3}^{r}, h_{a_2}^{r})=(g_{a_1}^{r},g_{a_3}^{r})$, $(g_{a_1}^{r}, g_{a_1}^{r+1})$, $(g_{a_1}^{r}, g_{a_2}^{r+1})$, $(g_{a_3}^{r}, g_{a_1}^{r})$, $(g_{a_3}^{r}, g_{a_1}^{r+1})$, or $(g_{a_3}^{r}, g_{a_2}^{r+1})$. Thus, only cases (a), (b) or (c) happen.
                If $h_{a_1}^{r}=g_{a_1}^{r}$, $(h_{a_3}^{r}, h_{a_2}^{r})=(g_{a_3}^{r-1},g_{a_3}^{r})$, $(g_{a_3}^{r-1},g_{a_1}^{r+1})$, or $(g_{a_3}^{r-1},g_{a_2}^{r+1})$. For both case, only cases (a), (b) or (c) happen.
            \end{enumerate}
            \item The case of $\pi_2(N)=(3,1,2)$.
            \begin{enumerate}
                \item When $B^{r-1} = A^{r-1} \setminus \{g_{a_3}^{r-1}\} \cup \{g_{a_1}^{r}\}$ (case (b)), agent $a_3$ first selects $g_{a_3}^{r-1}$. Then, $(h_{a_1}^{r}, h_{a_2}^{r})=(g_{a_2}^{r},g_{a_3}^{r})$, $(g_{a_2}^{r},g_{a_1}^{r+1})$, $(g_{a_2}^{r},g_{a_2}^{r+1})$, $(g_{a_3}^{r},g_{a_2}^{r})$, or $(g_{a_1}^{r+1},g_{a_2}^{r})$.
                \item When $B^{r-1} = A^{r-1} \setminus \{g_{a_3}^{r-1}\} \cup \{g_{a_2}^{r}\}$ (case (c)), agent $a_3$ first selects $g_{a_3}^{r-1}$. 
                Then, $(h_{a_1}^{r}, h_{a_2}^{r})=(g_{a_1}^{r},g_{a_3}^{r})$, $(g_{a_1}^{r},g_{a_1}^{r+1})$, or $(g_{a_1}^{r},g_{a_2}^{r+1})$. Only the cases (a), (b), or (c) happen.
            \end{enumerate}

            \item The case of $\pi_2(N)=(3,2,1)$.
            \begin{enumerate}
            \item When $B^{r-1} = A^{r-1} \setminus \{g_{a_3}^{r-1}\} \cup \{g_{a_1}^{r}\}$ (case (b)), agent $a_3$ first selects $g_{a_3}^{r-1}$.
            Then, $(h_{a_2}^{r}, h_{a_1}^{r})=(g_{a_2}^{r},g_{a_3}^{r})$ or $(g_{a_2}^{r},g_{a_1}^{r+1})$.
            \item When $B^{r-1} = A^{r-1} \setminus \{g_{a_3}^{r-1}\} \cup \{g_{a_2}^{r}\}$ (case (c)), agent $a_3$ first selects $g_{a_3}^{r-1}$. 
            Then, $(h_{a_2}^{r}, h_{a_1}^{r})=(g_{a_1}^{r},g_{a_3}^{r})$, $(g_{a_1}^{r},g_{a_1}^{r+1})$, $(g_{a_3}^{r},g_{a_1}^{r})$, $(g_{a_1}^{r+1},g_{a_1}^{r})$, or $(g_{a_2}^{r+1},g_{a_1}^{r})$.
            \end{enumerate}

            \item The case of $\pi_2(N)=(2,3,1)$.
            \begin{enumerate}
            \item When $B^{r-1} = A^{r-1} \setminus \{g_{a_3}^{r-1}\} \cup \{g_{a_1}^{r}\}$ (case (b)), agent $a_2$ first selects $g_{a_3}^{r-1}$ or $g_{a_2}^{r}$.
            If $h_{a_2}^{r} = g_{a_3}^{r-1}$, then $(h_{a_3}^{r}, h_{a_1}^{r}) = (g_{a_2}^{r}, g_{a_3}^{r})$ (case (a)), $(g_{a_2}^{r}, g_{a_1}^{r+1})$ (case (b)), $(g_{a_3}^{r}, g_{a_2}^{r})$ (case (a)) or $(g_{a_3}^{r}, g_{a_1}^{r+1})$ (case (d)).
            If $h_{a_2}^{r} = g_{a_2}^{r}$, then $(h_{a_3}^{r}, h_{a_1}^{r}) = (g_{a_3}^{r-1}, g_{a_3}^{r})$ or $(g_{a_3}^{r-1}, g_{a_1}^{r+1})$.
            \item When $B^{r-1} = A^{r-1} \setminus \{g_{a_2}^{r-1}\} \cup \{g_{a_1}^{r}\}$ (case (d)), agent $a_2$ first selects $g_{a_2}^{r-1}$. 
            Then, $(h_{a_3}^{r}, h_{a_1}^{r})=(g_{a_2}^{r},g_{a_3}^{r})$, $(g_{a_2}^{r},g_{a_1}^{r+1})$, or $(g_{a_3}^{r},g_{a_2}^{r})$.
            \end{enumerate}

            \item The case of $\pi_2(N)=(2,1,3)$.
            \begin{enumerate}
            \item When $B^{r-1} = A^{r-1} \setminus \{g_{a_3}^{r-1}\} \cup \{g_{a_1}^{r}\}$ (case (b)), agent $a_2$ first selects $g_{a_3}^{r-1}$ or $g_{a_2}^{r}$.
            If $h_{a_2}^{r} = g_{a_3}^{r-1}$, then $(h_{a_3}^{r}, h_{a_1}^{r}) = (g_{a_2}^{r}, g_{a_3}^{r})$, $(g_{a_2}^{r}, g_{a_1}^{r+1})$, $(g_{a_3}^{r}, g_{a_2}^{r})$, or $(g_{a_3}^{r}, g_{a_1}^{r+1})$.
            If $h_{a_2}^{r} = g_{a_2}^{r}$, then $(h_{a_3}^{r}, h_{a_1}^{r}) = (g_{a_3}^{r-1}, g_{a_3}^{r})$ or $(g_{a_3}^{r-1}, g_{a_1}^{r+1})$.
            \item When $B^{r-1} = A^{r-1} \setminus \{g_{a_2}^{r-1}\} \cup \{g_{a_1}^{r}\}$ (case (d)), agent $a_2$ first selects $g_{a_2}^{r-1}$. 
            Then, $(h_{a_3}^{r}, h_{a_1}^{r})=(g_{a_2}^{r},g_{a_3}^{r})$, $(g_{a_2}^{r},g_{a_1}^{r+1})$, $(g_{a_3}^{r},g_{a_2}^{r})$, or $(g_{a_3}^{r},g_{a_1}^{r+1})$.
            \item When $B^{r-1} = A^{r-1} \setminus \{g_{a_2}^{r-1}\} \cup \{g_{a_2}^{r}\}$ (case (e)), agent $a_2$ first selects $g_{a_2}^{r-1}$.
            Then, $(h_{a_3}^{r}, h_{a_1}^{r})=(g_{a_1}^{r},g_{a_3}^{r})$, $(g_{a_1}^{r},g_{a_1}^{r+1})$ or $(g_{a_3}^{r},g_{a_1}^{r})$.
            \item When $B^{r-1} = A^{r-1} \setminus {g_{a_2}^{r-1}} \cup {g_{a_3}^{r}}$ (case (f)), agent $a_2$ first selects $g_{a_2}^{r-1}$. After that, the pair $(h_{a_3}^{r}, h_{a_1}^{r})$ must be one of $(g_{a_1}^{r},g_{a_2}^{r})$, $(g_{a_1}^{r},g_{a_1}^{r+1})$, $(g_{a_2}^{r}, g_{a_1}^{r})$, $(g_{a_1}^{r+1}, g_{a_1}^{r})$, $(g_{a_2}^{r+1}, g_{a_1}^{r})$, or $(g_{a_3}^{r+1}, g_{a_1}^{r})$. Subsequently, $B^r$ follows one of cases (a), (d), (e), or (f).
            \end{enumerate}

        \end{enumerate}
Therefore, the statement holds for $r$ and we complete the proof.
\end{proof}

Finally, we show the proof of Theorem~\ref{thm:round-robin-n-2-3} for three agents.
\begin{proof}[Proof of Theorem~\ref{thm:round-robin-n-2-3} for three agents]
Similarly to the case of two agents, we prove that for every agent $a\in N$ and round $r=1,2,\ldots,\lceil m/n \rceil - 1$,
\[
u_a\bigl(h^{r}_a\bigr)\geq u_a\bigl(g^{r+1}_a\bigr)
\quad\text{and}\quad
u_{a}\bigl(g^{r}_{a}\bigr)\geq u_{a}\bigl(h^{r+1}_{a}\bigr).
\]

For a round $r$ such that $B^{r} = A^{r}$ (cases (a) in Figure~\ref{table:round_robin_n_3_r_1}) holds, the both inequalities hold for every $a\in N$.
Thus, we only consider a round $r$ such that there exist $i_1 \in \{2,3\} $ and $i_2 \in \{1,2,3\}$ such that $B^{r} = A^{r} \setminus \{g_{a_{i_1}}^{r}\} \cup \{g_{a_{i_2}}^{r+1}\}$ (cases (b), (c), (d), (e) and (f) in Figure~\ref{table:round_robin_n_3_r_1}). 
For each $i\in \{1,2,3\} \setminus \{i_2\}$,
we have $u_{a_i}\bigl(h^{r}_{a_i}\bigr)\ge u_{a_i}\bigl(g^{r+1}_{a_i}\bigr)$ since good $g_{a_i}^{r+1}$ remains in round $r$ under $\pi_2$, i.e., $g_{a_i}^{r+1} \notin B^r$.

To complete the proof, we consider a case analysis.
\begin{enumerate}
            \item The case of $\pi_2(N)=(1,3,2)$. 
            \begin{enumerate}
                \item When $B^{r} = A^{r} \setminus \{g_{a_3}^{r}\} \cup \{g_{a_1}^{r+1}\}$ (case (b)), from above discussion, we have $u_{a_2}\bigl(h^{r}_{a_2}\bigr)\ge u_{a_2}\bigl(g^{r+1}_{a_2}\bigr)$ and $u_{a_3}\bigl(h^{r}_{a_3}\bigr)\ge u_{a_3}\bigl(g^{r+1}_{a_3}\bigr)$. Since $\pi_2(a_1) < \pi_2(a_2)$, in round $r$ under $\pi_2$, agent $a_1$ must pick good $g_{a_1}^{r+1}$ or a good that is preferable to good $g_{a_1}^{r+1}$.
                Thus, we have $h^{r}_{a_2} = g^{r+1}_{a_2}$, $u_{a_2}\bigl(h^{r}_{a_2}\bigr)\ge u_{a_2}\bigl(g^{r+1}_{a_2}\bigr)$ and the first inequality. 
                
                For each agent $a \in N$, there is no good within $M \setminus B^r$ that is preferable to good $g_{a}^{r}$. Thus, we obtain the second inequality. 
                
                \item When $B^{r} = A^{r} \setminus \{g_{a_3}^{r}\} \cup \{g_{a_2}^{r+1}\}$ (case (c)), we have $u_{a_1}\bigl(h^{r}_{a_1}\bigr)\ge u_{a_1}\bigl(g^{r+1}_{a_1}\bigr)$ and $u_{a_3}\bigl(h^{r}_{a_3}\bigr)\ge u_{a_3}\bigl(g^{r+1}_{a_3}\bigr)$. In round $r$ under $\pi_2$, good $g_{a_2}^{r+1}$ must be selected agent $a_2$. Otherwise, agents $a_1$ or $a_3$ choose good $g_{a_2}^{r+1}$, and this contradicts that goods $g_{a_1}^{r+1}$ and $g_{a_3}^r$ remain. Thus, we have $h^{r}_{a_2} = g^{r+1}_{a_2}$, $u_{a_2}\bigl(h^{r}_{a_2}\bigr)\ge u_{a_2}\bigl(g^{r+1}_{a_2}\bigr)$ and the first inequality. 

                For each agent $a \in N$, there is no good within $M \setminus B^r$ that is preferable to good $g_{a}^{r}$. Thus, we obtain the second inequality. 

            \end{enumerate}
            \item The case of $\pi_2(N)=(3,1,2)$.
            \begin{enumerate}
                \item When $B^{r} = A^{r} \setminus \{g_{a_3}^{r}\} \cup \{g_{a_1}^{r+1}\}$ (case (b)), we obtain the two inequalities by the same discussion as that in the case $\pi_2(N)=(1,3,2)$ and case (b).

                \item When $B^{r} = A^{r} \setminus \{g_{a_3}^{r}\} \cup \{g_{a_2}^{r+1}\}$ (case (c)), we obtain the two inequalities by the same discussion as that in the case $\pi_2(N)=(1,3,2)$ and case (c).
                
            \end{enumerate}

            \item The case of $\pi_2(N)=(3,2,1)$.
            \begin{enumerate}
            \item When $B^{r} = A^{r} \setminus \{g_{a_3}^{r}\} \cup \{g_{a_1}^{r+1}\}$ (case (b)), we have $u_{a_2}\bigl(h^{r}_{a_2}\bigr)\ge u_{a_2}\bigl(g^{r+1}_{a_2}\bigr)$ and $u_{a_3}\bigl(h^{r}_{a_3}\bigr)\ge u_{a_3}\bigl(g^{r+1}_{a_3}\bigr)$.  In round $r$ under $\pi_2$, good $g_{a_1}^{r+1}$ must be selected by agent $a_1$ or $a_2$, since good $g_{a_3}^{r}$ remains and agent $a_3$ never choose good $g_{a_1}^{r+1}$.
            
            \item When $B^{r} = A^{r} \setminus \{g_{a_3}^{r}\} \cup \{g_{a_2}^{r+1}\}$ (case (c)), we obtain the two inequalities by the same discussion as that in the case $\pi_2(N)=(1,3,2)$ and case (c).
            \end{enumerate}

            \item The case of $\pi_2(N)=(2,3,1)$.
            \begin{enumerate}
            \item When $B^{r} = A^{r} \setminus \{g_{a_3}^{r}\} \cup \{g_{a_1}^{r+1}\}$ (case (b)), we have $u_{a_2}\bigl(h^{r}_{a_2}\bigr)\ge u_{a_2}\bigl(g^{r+1}_{a_2}\bigr)$ and $u_{a_3}\bigl(h^{r}_{a_3}\bigr)\ge u_{a_3}\bigl(g^{r+1}_{a_3}\bigr)$. In round $r$ under $\pi_2$, good $g_{a_1}^{r+1}$ must be selected by agents $a_1$ or $a_2$. When good $g_{a_1}^{r+1}$ is selected by agent $a_1$, we have $h^{r}_{a_1} = g_{a_1}^{r+1}$. When good $g_{a_1}^{r+1}$ is selected by agent $a_2$, good $g_{a_2}^{r}$ must have been chosen in round $r-1$ under $\pi_2$. However, when $\pi_2(N)=(2,3,1)$, we have $g_{a_2}^{r} \notin B^{r-1}$. Thus, good $g_{a_1}^{r+1}$ is selected agent $a_1$. Hence, $u_{a_1}\bigl(h^{r}_{a_1}\bigr)\ge u_{a_1}\bigl(g^{r+1}_{a_1}\bigr)$.

            For each agent $a \in N$, there is no good within $M \setminus B^r$ that is preferable to good $g_{a}^{r}$. Thus, we obtain the second inequality. 
            
            \item When $B^{r} = A^{r} \setminus \{g_{a_2}^{r}\} \cup \{g_{a_1}^{r+1}\}$ (case (d)), we have $u_{a_2}\bigl(h^{r}_{a_2}\bigr)\ge u_{a_2}\bigl(g^{r+1}_{a_2}\bigr)$ and $u_{a_3}\bigl(h^{r}_{a_3}\bigr)\ge u_{a_3}\bigl(g^{r+1}_{a_3}\bigr)$. In round $r$ under $\pi_2$, good $g_{a_1}^{r+1}$ must be selected by agents $a_1$ or $a_3$ since good $g_{a_2}^{r}$ remains and agent $a_2$ never chooses good $g_{a_1}^{r+1}$.
            When good $g_{a_1}^{r+1}$ is selected by agent $a_3$, we must have $g_{a_3}^{r} \in B^{r-1}$ or agent $a_2$ picks good $g_{a_3}^{r}$ in round $r$ under $\pi_2$. Now we have $g_{a_3}^{r} \notin B^{r-1}$ when $\pi_2(N)=(2,3,1)$, and agent $a_2$ chooses good $g_{a_3}^{r}$ in round $r$. In this case, agent $a_1$ selects good $g_{a_1}^r$ in round $r$ under $\pi_2$, and we get $u_{a_1}\bigl(h^{r}_{a_1}\bigr)\ge u_{a_1}\bigl(g^{r+1}_{a_1}\bigr)$.

            For agents $a_1$ and $a_2$, there is no good within $M \setminus B^r$ that is preferable to good $g_{a_1}^{r}$ and good $g_{a_2}^{r}$ each. Then, we get $u_{a_1}\bigl(g^{r}_{a_1}\bigr)\ge u_{a_1}\bigl(h^{r+1}_{a_1}\bigr)$ and $u_{a_1}\bigl(g^{r}_{a_1}\bigr)\ge u_{a_1}\bigl(h^{r+1}_{a_1}\bigr)$. 
            For agent $a_3$, good $g^{r}_{a_2} \in M\setminus B^r$ may be preferable to good $g_{a_3}^{r}$. However, in round $r+1$ under $\pi_2$, agent $a_2$ picks good $g^{r}_{a_2}$. Then, agent $a_3$ can not choose good $g^{r}_{a_2}$ in round $r+1$ under $\pi_2$ and we have $u_{a_3}\bigl(g^{r}_{a_3}\bigr)\ge u_{a_3}\bigl(h^{r+1}_{a_3}\bigr)$. Therefore, we obtain the second inequality.
            \end{enumerate}

            \item The case of $\pi_2(N)=(2,1,3)$.
            \begin{enumerate}
            \item When $B^{r} = A^{r} \setminus \{g_{a_3}^{r}\} \cup \{g_{a_1}^{r+1}\}$ (case (b)), we have $u_{a_2}\bigl(h^{r}_{a_2}\bigr)\ge u_{a_2}\bigl(g^{r+1}_{a_2}\bigr)$ and $u_{a_3}\bigl(h^{r}_{a_3}\bigr)\ge u_{a_3}\bigl(g^{r+1}_{a_3}\bigr)$. In round $r$ under $\pi_2$, good $g_{a_1}^{r+1}$ must be selected by agents $a_1$ or $a_2$. When good $g_{a_1}^{r+1}$ is selected by agent $a_2$, good $g_{a_2}^{r}$ must have been chosen in round $r-1$ under $\pi_2$. Thus, in round $r-1$, case (e) holds. However, in the case (e), $g_{a_2}^{r-1}\notin B^{r-1}$ and agent $a_2$ selects good $g_{a_2}^{r-1}$ in round $r$. Hence, good $g_{a_1}^{r+1}$ is not selected by agent $a_2$, and good $g_{a_1}^{r+1}$ must be selected by agents $a_1$. Therefore, we obtain $u_{a_1}\bigl(h^{r}_{a_1}\bigr)\ge u_{a_1}\bigl(g^{r+1}_{a_1}\bigr)$, and the first inequality.

            For each agent $a \in N$, there is no good within $M \setminus B^r$ that is preferable to good $g_{a}^{r}$. Thus, we obtain the second inequality.

            \item When $B^{r} = A^{r} \setminus \{g_{a_2}^{r}\} \cup \{g_{a_1}^{r+1}\}$ (case (d)), we have $u_{a_2}\bigl(h^{r}_{a_2}\bigr)\ge u_{a_2}\bigl(g^{r+1}_{a_2}\bigr)$ and $u_{a_3}\bigl(h^{r}_{a_3}\bigr)\ge u_{a_3}\bigl(g^{r+1}_{a_3}\bigr)$. In round $r$ under $\pi_2$, good $g_{a_1}^{r+1}$ must be selected by agents $a_1$ or $a_3$.
            Since $\pi_2(a_1) < \pi_2(a_3)$, in round $r$ under $\pi_2$, agent $a_1$ must pick good $g_{a_1}^{r+1}$ or a good that is preferable to good $g_{a_1}^{r+1}$. Thus, we get the first inequality.

            Moreover, we obtain the second inequalities by the same discussion as that in the case $\pi_2(N)=(2,3,1)$ and case (d).

            \item When $B^{r} = A^{r} \setminus \{g_{a_2}^{r}\} \cup \{g_{a_2}^{r+1}\}$ (case (e)), we have $u_{a_1}\bigl(h^{r}_{a_1}\bigr)\ge u_{a_1}\bigl(g^{r+1}_{a_1}\bigr)$ and $u_{a_3}\bigl(h^{r}_{a_3}\bigr)\ge u_{a_3}\bigl(g^{r+1}_{a_3}\bigr)$. Now, we have $g_{a_2}^{r} \notin B^{r}$. Thus, agent $a_2$ gets a good that is preferable to good $g_{a_1}^{r+1}$ in round $r$ under $\pi_2$. Then, we have $u_{a_2}\bigl(h^{r}_{a_2}\bigr)\ge u_{a_2}\bigl(g^{r+1}_{a_2}\bigr)$.

            For agents $a_1$ and $a_2$, there is no good within $M \setminus B^r$ that is preferable to good $g_{a_1}^{r}$ and good $g_{a_2}^{r}$ each.
            For agent $a_3$, good $g^{r}_{a_2} \in M\setminus B^r$ may be preferable to good $g_{a_3}^{r}$. However, in round $r+1$ under $\pi_2$, agent $a_2$ picks good $g^{r}_{a_2}$. Then, we have $u_{a_3}\bigl(g^{r}_{a_3}\bigr)\ge u_{a_3}\bigl(h^{r+1}_{a_3}\bigr)$. 

            \item When $B^{r} = A^{r} \setminus \{g_{a_2}^{r}\} \cup \{g_{a_3}^{r+1}\}$ (case (f)), we have $u_{a_1}\bigl(h^{r}_{a_1}\bigr)\ge u_{a_1}\bigl(g^{r+1}_{a_1}\bigr)$ and $u_{a_2}\bigl(h^{r}_{a_2}\bigr)\ge u_{a_2}\bigl(g^{r+1}_{a_2}\bigr)$. In round $r$ under $\pi_2$, good $g_{a_3}^{r+1}$ must be selected by agent $a_3$ since $g_{a_1}^{r+1}, g_{a_2}^{r+1} \notin B^r$. Thus, we have $h^{r}_{a_3} = g^{r+1}_{a_3}$ and $u_{a_3}\bigl(h^{r}_{a_3}\bigr)\ge u_{a_3}\bigl(g^{r+1}_{a_3}\bigr)$.

            Moreover, we obtain the second inequalities by the same discussion as that in the case $\pi_2(N)=(2,1,3)$ and case (e).
            \end{enumerate}
\end{enumerate}
From these discussion, we complete the proof.
\end{proof}

\section{Proof of Theorem~\ref{thm:degree_envy_envy-cycle}}\label{app:proof_EC_degree_envy}
\begin{proof}[Proof of Theorem~\ref{thm:degree_envy_envy-cycle}]
    Let $N=\{a_1, a_2, \ldots, a_n\}$ be the set of agents.
    We consider a preference profile where agent $a_1$ values all goods equally at 1, while all other $n-1$ agents have zero utility for every good.
    Consider the agent ordering $\pi_1$ where $\pi_1(a_i) = i$ for each $i\in \{1, 2, \ldots, n\}$.  Since no agent envies any other agent in this profile, the envy-cycle mechanism terminates without any exchanges, and consequently, agent $a_1$ receives all $m$ goods.
    Now consider the reverse ordering $\pi_2$ where $\pi_2(a_i) = n + 1 - i$ for each $i \in \{1, 2, \ldots, n\}$. In this case, since agent $a_1$ appears last in the ordering, agent $a_1$ ultimately receives at most $\left\lfloor\frac{m}{n}\right\rfloor$ goods after the envy-cycle mechanism completes. This concludes the proof.
\end{proof}

\section{Proof of Theorem~\ref{thm:envy-cycle_if_and_only_if}}\label{app:proof_EC_degree_envy_if_and_only_if}

\begin{proof}[Proof of Theorem~\ref{thm:envy-cycle_if_and_only_if}]
Since we only consider the case $n\ge 2$, we have $m - \left\lfloor\frac{m}{n}\right\rfloor = 1$. Then, either $m = 1$ (for any $n \geq 2$), or $m = 2$ and $n = 2$. When $m=1$, the envy-cycle mechanism clearly satisfies PEF1 since all agents obtain at most one good. 

We now consider the case where $n=2$ and $m=2$. Let $N=\{a_1, a_2\}$ be the set of agents.
    We compare two agent orderings $\pi_1$ and $\pi_2$ where $\pi_1(a_i) = i$ and $\pi_2(a_i) = 3-i$ for each $i\in \{1,2\}$. 
    Under $\pi_1$, good $g_1$ is first allocated to agent $a_1$. 
    Then, if $u_{a_2}(g_1)> 0$, then good $g_2$ is allocated to position 2 (agent $a_2$). In this case, each agent holds exactly one good, thus there is no position-based envy in the PEF1 sense.
    If $u_{a_2}(g_1) = 0$, then position 1 (agent $a_1$) obtains both $g_1$ and $g_2$. 
    In this case, if under $\pi_2$, agent $a_1$ receives no goods, then PEF1 can be violated.
    Under $\pi_2$, good $g_1$ is fist allocated to agent $a_2$. Then, only when $u_{a_1}(g_1) = 0$ does position 1 (agent $a_2$) obtain both $g_1$ and $g_2$, while agent $a_1$ receives no goods under $\pi_2$.
    However, since we have $u_{a_1}(g_1) = 0$, even though agent $a_1$ obtains two goods under $\pi_1$, the position-based envy compared to $\pi_2$ is at most the value of one good. Symmetrically, for agent $a_2$, the position-based envy under $\pi_1$ toward $\pi_2$ is also at most the value of one good. Therefore, these complete the proof.
\end{proof}

\end{document}